\newenvironment{protocol}{\vspace{-1em}\begin{framed}
%\vspace{-0.5em}
%\begin{minipage}{17cm}
}{
%\end{minipage}
\vspace{-3ex}\end{framed}\vspace{-1em}}
\DeclareMathAlphabet{\mathpzc}{OT1}{pzc}{m}{it}
\DeclareMathAlphabet{\mathcalligra}{T1}{calligra}{m}{n}
\def\Tr{\operatorname{Tr}} \def\>{\rangle} \def\<{\langle}
 \def\id{\mathsf{id}}
\def\mN{\mathcal{N}}
\def\mS{\mathcal{S}}
\def\sH{\mathcal{H}} 
\def\sS{{\boldsymbol{\mathsf{S}}}}  
\def\bound{\boldsymbol{\mathsf{L}}}
 \def\dec#1{\mathscr{D}}
 \def\openone{\mathds{1}}
\def\suff{\succ}
\def\mD{\mathcal{D}}
\def\Span{{\operatorname{span}}}
\def\mM{\mathcal{M}}
\def\supseteqs{\supseteq_{\operatorname{s}}}
\def\supseteqw{\supseteq_{\operatorname{w}}}
\def\mNenv{\mN_{\operatorname{env}}}
\newcommand{\set}[1]{\mathscr{#1}}
\newcommand{\ens}[1]{\mathcal{#1}}
\newcommand{\model}[1]{\mathfrak{#1}}
\newcommand{\povm}[1]{\mathbb{#1}}
\renewcommand{\qedsymbol}{\nobreak \ifvmode \relax \else
  \ifdim \lastskip<1.5em \hskip-\lastskip \hskip1.5em plus0em
  minus0.5em \fi \nobreak \vrule height0.75em width0.5em
  depth0.25em\fi}
\renewcommand{\ge}{\geqslant}
\renewcommand{\le}{\leqslant}
\newtheorem{theorem}{Theorem}
\newtheorem{corollary}{Corollary}
\newtheorem{lemma}{Lemma}
\newtheorem{definition}{Definition}
\theoremstyle{remark}
\newtheorem{remark}{Remark}
\theoremstyle{definition}
\newcommand{\bea}{\begin{eqnarray}}
\newcommand{\eea}{\end{eqnarray}}
\newcommand{\be}{\begin{equation}}
\newcommand{\ee}{\end{equation}}
\def\reff#1{(\ref{#1})}
\begin{document}

%\SetWatermarkText{NOT FOR DISTRIBUTION}
%\SetWatermarkAngle{60}
%\SetWatermarkScale{0.5}

%\begin{frontmatter}

\title{Game-theoretic characterization of antidegradable channels}

% \author{{\small \sc Francesco Buscemi}\footnote{buscemi@iar.nagoya-u.ac.jp}\\
%    \footnotesize{\em Graduate School of Information Sciences,
%      Nagoya University}\\
%    \footnotesize{\em Chikusa-ku, Nagoya 464-8601, Japan.}}

%\runauthor{F. Buscemi}

%\affiliation{Institute for Advanced Research, University of Nagoya}

%\address{}

%\end{aug}

\author[1]{Francesco Buscemi\thanks{buscemi@is.nagoya-u.ac.jp}}
\author[2]{Nilanjana Datta\thanks{n.datta@statslab.cam.ac.uk}}
\author[3]{Sergii Strelchuk\thanks{ss870@cam.ac.uk}}

\affil[1]{\footnotesize Graduate School of Information Science, Nagoya University, Chikusa-ku, Nagoya 464-8601, Japan\normalsize}
\affil[2]{\footnotesize Statistical Laboratory, University of Cambridge, Cambridge CB3 0WB, U.K.\normalsize}
\affil[3]{\footnotesize Department of Applied Mathematics and Theoretical Physics,
	University of Cambridge, Cambridge CB3 0WA, U.K.\normalsize}

\date{}

\maketitle

\begin{abstract}
We introduce a guessing game involving a quantum channel, three parties---the sender, the receiver and an eavesdropper, Eve---and a quantum public side channel. We prove that a necessary and sufficient condition for the quantum channel to be antidegradable, is that Eve wins the game. We thus obtain a complete operational characterization of antidegradable channels in a game-theoretic framework.
\end{abstract}

% {\footnotesize{\noindent{\bf Keywords:} 

%}}

%\end{frontmatter}

%%%%%%%%%%%%%%%%%%%%%%%%%%%%%%%%%%%%%%%%%%%%%%%%%%%%%%%%%%%%%%%%%%%%%%

\section{Introduction}
There are numerous tasks in quantum information theory which involve the use of quantum channels.
A quantum channel has many different capacities, 
depending on the task at hand, the nature of the information transmitted, and available resources.
For example, the quantum capacity of a channel quantifies its potential for communication
of quantum information, whereas its private capacity quantifies its potential for secure communication of classical information~\cite{devetak}. Deciding whether a given quantum channel has a positive capacity is a non-trivial problem,
e.g. there does not exist a unique criterion to determine whether the quantum capacity of a given channel is zero. 
Classical channels with zero capacity are uninteresting in the information-theoretic sense. In contrast, quantum channels with zero capacity 
exhibit intriguing behavior as shown by the superactivation phenomenon~\cite{smith_quantum_2008}: there exist examples of pairs of channels with zero quantum capacity, 
which, when used in tandem, allow transmission of quantum information. 
One particular class of zero-capacity channels consists of {\em{antidegradable channels}}. For such a channel, a post-processing of its
environment can simulate the output of the channel~\cite{footnote1}. The no-cloning theorem~\cite{wootters_single_1982} ensures that such channels have zero quantum capacity. 
The simplest example of the latter is a 50\% erasure channel which with equal probability  either transmits the input state 
perfectly or replaces it with an erasure flag. However, there are other non-trivial examples of channels with zero quantum capacity,
e.g.,~the positive partial transpose (PPT) channels~\cite{horodecki_binding_1999}. In addition, antidegradable channels also have 
zero private capacity (unlike PPT channels), but whether they are the only non-trivial quantum channels with this property
is an open question (since there exist echo-correctable channels with arbitrarily small, but non-zero, private capacity~\cite{smith_extensive_2009,li_private_2009}). Therefore, the knowledge that a given channel has zero quantum and private capacity 
is not sufficient to conclude that it is antidegradable. This leads us to the following question: 
\medskip

\indent
(Q): Is there a setting in which one can obtain a complete operational characterization of antidegradable channels? 
\smallskip

In this paper we answer this question in the affirmative by constructing a game-theoretic framework which involves
the noisy quantum channel ${\cal{N}}$ (which we wish to characterize), a quantum public side channel ${\cal{S}}$,
and three parties: Alice (the sender), Bob (the receiver) and Eve (the eavesdropper).
Alice sends classical information to Bob through 
${\cal{N}}$, whose environment is accessible to Eve. Alice also sends information through  ${\cal{S}}$, which is accessible to
both Bob and Eve. 
\newpage

The game is constructed as follows (formal definitions are given in Section~\ref{sec:main}). 
\bigskip

\begin{protocol}

%\begin{center}{\bf{\large{The game}}}
%\end{center}

\subsection{The guessing game}\label{game}
\begin{enumerate}
\item  Alice chooses a letter $x$ at random from a given finite alphabet $\set{X}$, and encodes it in a bipartite state, say $\rho^x_{AA_0}$.
	\item The $A$ part of the input is sent through $\cal{N}$, while the $A_0$ part is transmitted via $\cal{S}$.
	\item Bob then obtains the output of $\mN$ while Eve receives the
	information that is transmitted to the channel's environment. In other words, she receives the output of the 
	{\em{complementary channel}} $\mNenv$ (see Section 2 for its definition). In addition, they both receive the output of $\cal{S}$.
	\item The task now, for both Bob and Eve, is to guess which letter $x$ Alice chose. Since Bob and Eve are competing, they both 
adopt the optimal guessing strategy they have available. Correspondingly, the reliabilities of their guesses is measured by the optimal 
guessing probabilities of the ensembles of states they receive.
	\item Bob wins the game whenever his guessing probability is \textit{strictly higher} than that of Eve (i.e. in the case of a draw, Eve wins).
\end{enumerate}
\end{protocol}
\bigskip

The situation is depicted in Figure~\ref{fig:1} below.
\begin{figure}[h]
	\begin{center}
		\includegraphics[width=8cm]{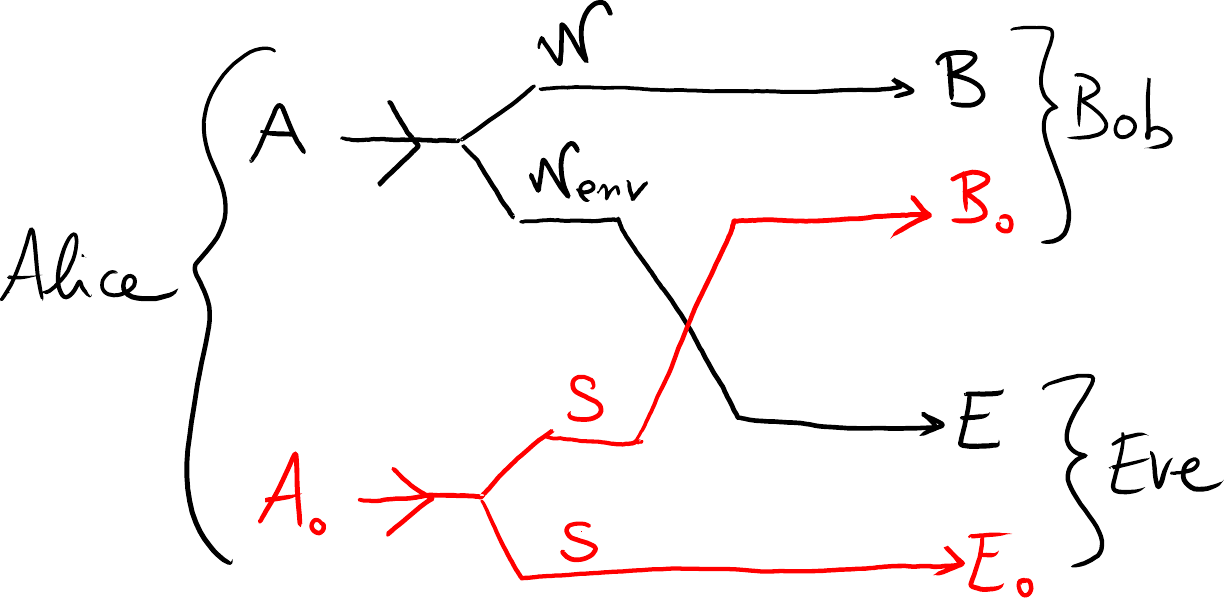}
	\end{center}
	\caption{Structure of the guessing game: Alice communicates with Bob using the quantum channel $\mN$ (i.e. the one which we want to characterize) and a quantum channel $\mS$, which is public, in the sense that it conveys the same output to Bob and Eve. 
A natural example of such a public channel is a symmetric channel~\cite{SSW2008,BO2012,BO2013}. Bob plays the guessing game against Eve, 
who has access to the environment of $\mN$ (labelled by $\mNenv$) and $\mS$.}
	\label{fig:1}
\end{figure}

To state our main result (Theorem~\ref{theo:main}) which leads to the characterization of antidegradable channels, we first 
introduce the notion of {\it{extension}} of a quantum channel: 
for any pair of quantum channels $({\cal{N}}_\alpha, {\cal N}_\beta)$ we say that ${\cal N}_\alpha$ is an {\it extension} of ${\cal N}_\beta$ if  
${\cal N}_\beta={\cal D}\circ{\cal N}_\alpha$ for some quantum channel ${\cal D}$. This is a generalization of the notion of {\it degradable extension} of the channel (introduced in~\cite{SmithSmolin08}) which corresponds to the case in which the channel ${\cal N}_\alpha$ is degradable and 
${\cal N}_\beta$ is complementary to it.  
Then our result can be stated as follows: for any given input ensemble of states, the guessing probability of the output ensemble of $\mN_\alpha$ is higher than that of $\mN_\beta$, if and only if $\mN_\alpha$ is an extension of $\mN_\beta$. We establish the above result by first proving its analogue for statistical comparison of bipartite states and then using Choi isomorphism.

Consider the case in which ${\cal N}_\beta$ is the quantum channel ${\cal N}$ employed in the guessing game~\ref{game}, and ${\cal N}_\alpha$ is the channel $\mNenv$
which is complementary to it. For this choice, our result (Theorem~\ref{theo:main})  implies that ${\cal N}$ is antidegradable \textit{if and only if} Eve always wins, 
regardless of the choice of Alice's encoding strategy. In other words, our result shows that, \textit{for any} channel which is not 
antidegradable, there exists (at least) one encoding strategy which Alice can choose to make Bob win the guessing game.

We note that even though the scenario of our guessing game is `cryptographic' in its nature (since Bob and Eve compete), 
proving that Bob is able to win against Eve in the guessing game is insufficient to conclude that any \textit{information-theoretic secrecy} can be established between Alice and Bob. This is because, in the guessing game, we only compare the 
guessing probabilities of Bob and Eve, and not the mutual informations between
the random variables corresponding to their respective inferences and 
that of the random variable corresponding to Alice's input. 
However, the game-theoretic scenario~\ref{game} has the particular advantage of singling out antidegradable channels as the only channels for which Eve necessarily wins. 
In other words, a {\em{necessary and sufficient condition}} for the quantum channel $\mN$ to be antidegradable is that Eve wins the guessing game, for any possible encoding strategy Alice may choose.

The paper is organized as follows. In Section~\ref{sec:main} we introduce the necessary notation and definitions, and then state our main result (Theorem~\ref{theo:main}). In Section~\ref{sec:states} we derive an analogue of Theorem~\ref{theo:main} for partial orderings of bipartite quantum states. In Section~\ref{sec:back-to-channels} we use this result, in conjunction with the Choi isomorphism, to obtain a proof of 
Theorem~\ref{theo:main}. Some further implications of Theorem~\ref{theo:main} 
for convex combinations of channels and their extensions are given in \ref{sec:further}. We end with a brief summary and open questions in Section~\ref{sec:conclude}.

\section{Main result}\label{sec:main}

\subsection{Notation and definitions}

In what follows, we only consider quantum systems defined on finite
dimensional Hilbert spaces $\sH$. We denote by $\bound(\sH)$ the set
of all linear operators
acting on $\sH$, and by $\sS(\sH)$ the set of all density operators (or
\textit{states}) $\rho\in\bound(\sH)$, with $\rho\ge 0$ and $\Tr[\rho]=1$. The
identity operator in $\bound(\sH)$ will be denoted by the symbol
$\openone$, whereas the identity map from $\bound(\sH)$ to
itself will be denoted by $\id$. A \textit{positive-operator valued measure} (POVM) is a family $\povm{P}=\{P^x\}_{x\in\set{X}}\subset\bound(\sH)$ of operators $P^x\ge 0$, labelled by a finite index set $\set{X}=\{x\}$ (i.e. the \textit{outcome set}), such that $\sum_{x\in\set{X}}P^x=\openone$.

In what follows, a
\emph{channel} is considered as a triple $(\sH_A,\sH_B,\mN)$, where $\sH_A$ is the input Hilbert space, $\sH_B$ is the output Hilbert space, and $\mN:\bound(\sH_A)\to \bound(\sH_B)$ is a completely positive, trace-preserving (CPTP) linear map. Where possible, we will denote a channel $(\sH_A,\sH_B,\mN)$ simply as $\mN$. The adjoint (Heisenberg dual) of a linear map $\mN:\bound(\sH_A)\to
\bound(\sH_B)$ is the linear map $\mN^*:\bound(\sH_B)\to
\bound(\sH_A)$ defined with respect to the Hilbert-Schmidt inner product by  $\Tr[\mN^*(X)\
Y]:=\Tr[X\ \mN(Y)]$, for all $X\in\bound(\sH_B)$ and
$Y\in\bound(\sH_A)$. Therefore, if $\mN$ is a channel, its
adjoint $\mN^*$ is a completely positive, unital (unit-preserving), i.e. $\mN^*(\openone_B)=\openone_A$, linear map (and vice versa).

Given a channel $(\sH_A,\sH_B,\mN)$, its
\emph{Stinespring isometric dilation}~\cite{stinespring} is given by a complementary (ancillary) quantum
system $\sH_E$ (the `environment') together with an isometry
$V:\sH_A\to\sH_B\otimes\sH_E$, $V^\dag V=\openone_A$, such that
\begin{equation*}
\mN(X)=\Tr_E[V X V^\dag],\quad\forall X\in\bound(\sH_A).
\end{equation*}
The Stinespring isometric dilation, which always exists, can be
considered to be essentially unique, in the sense that it is
unique up to isometric equivalences on $\sH_E$. This leads us to
define an essentially unique complementary channel
$(\sH_A,\sH_E,\mN_{\operatorname{env}})$ as follows~\cite{deg1,compl1}:
\begin{equation*}
\mN_{\operatorname{env}}(Y):=\Tr_B[V Y V^\dag].
\end{equation*}

\begin{definition}[Antidegradable channels]
	Given a channel $(\sH_A,\sH_B,\mN)$, let $(\sH_A,\sH_E,\mN_{\operatorname{env}})$ be its complementary channel. 
%	$\mN$ is called
%	\emph{degradable} if and only if there exists another channel
%	$(\sH_B,\sH_E,\mD)$ such that
%	\begin{equation*}
%	\mN_{\operatorname{env}}=\mD\circ\mN.
%	\end{equation*}
	$\mN$ is called
	\emph{antidegradable} if and only if there exists another channel
	$(\sH_E,\sH_B,\mD)$ such that
	\begin{equation*}
	\mN=\mD\circ\mN_{\operatorname{env}}.
	\end{equation*}
%	$\mN$ is called
%	\emph{symmetric} if it is both degradable and antidegradable.
	(It is easy to verify that the property of being antidegradable does
	not depend on the particular Stinespring isometric dilation chosen to
	construct the complementary channel.)
\label{def:degradable}\end{definition}
In other words, an eavesdropper, Eve, who has access to the environment of an antidegradable channel, can perfectly simulate the output of the channel by means of a fixed post-processing which is independent of the input. In this sense, Eve always receives
\emph{more information} than the receiver Bob. As a straightforward consequence, antidegradable
channels turn out to have zero capacity for any
information-theoretic protocol that aims to put Bob in a position of advantage over eavesdropper.

Another notion we need is the following~\cite{SSW2008}:
\begin{definition}[$d$-dimensional symmetric channels]\label{def:symmetric}
For a given finite integer $d\ge 2$, let $\sH_A$ be a $\frac{d(d+1)}{2}$-dimensional Hilbert space, $\sH_B$ and $\sH_E$ be $d$-dimensional Hilbert spaces; moreover, let $V:\sH_A\to\sH_B\otimes\sH_E$ be any isometry embedding $\sH_A$ into the symmetric subspace $(\sH_B\otimes\sH_E)_{\operatorname{sym}}$; the channel $(\sH_A,\sH_B,\mS)$, defined by its action $\mS(\rho):=\Tr_E[V\rho V^\dag]$ for all $\rho\in\sS(\sH_A)$, is called a $d$-dimensional symmetric channel.	
\end{definition}
 It should be clear then that $d$-dimensional symmetric channels are, in particular, antidegradable, with the post-processing channel $\mD$ given by the identity map. The class of symmetric quantum channels has been identified as the quantum analogue of a \textit{public channel}~\cite{SSW2008,BO2012,BO2013}, 
since, for a symmetric channel, the receiver and eavesdropper receive 
the same output~\cite{footnote2}.

\subsection{Statement of the main result}
Before stating our main result we introduce some further definitions.

%We begin by recalling what an ensemble of quantum states is:
\begin{definition}
	A \emph{(finite) ensemble of quantum states} $\model{m}$ is defined as a triple $(\sH,\set{X},\ens{E})$, where $\sH$ is a finite-dimensional input Hilbert space,  $\set{X}=\{x\}$ is a finite indexing alphabet, and $\ens{E}=\{p_x,\rho^x\}_{x\in\set{X}}$ is a collection of quantum states $\rho^x\in\sS(\sH)$ and probabilities $p_x$. 
\end{definition}
Consider now a quantum channel $(\sH_A,\sH_B,\mN)$ and an ensemble $\model{m}=(\sH_A,\set{X},\ens{E})$. We can then imagine the situation in which a sender (say, Alice) chooses a letter $x\in\set{X}$ at random according to the probability distribution $p_x$, prepares a quantum system in the corresponding state $\rho_A^x$, and sends this through $\mN$ to a receiver (say, Bob), who has to guess the input letter chosen by Alice. This setup can be formally described as follows:
\begin{definition}[Dynamical guessing games]\label{def:dynamic-guessing}
	Let $(\sH_A,\sH_B,\mN)$ be a quantum channel,
	$(\sH_A,\set{X},\ens{E})$ an ensemble. The corresponding \textit{guessing game} is defined as the task of correctly guessing letter $x$ upon receiving $\mN(\rho^x_A)$. The optimal probability of winning the game is given by
	\begin{equation}\label{eq:guess-prob}
	p^*(\mN,\model{m}):=\max_{\povm{P}_B}\sum_{x\in\set{X}}p_x\Tr[P^x_B\ \mN(\rho^x_A)].
	\end{equation}
\end{definition}

Equation~(\ref{eq:guess-prob}) above measures `how good' is a given channel $\mN$ for communicating the information about $\set{X}$ encoded in $\model{m}$. Accordingly, given another channel $(\sH_A,\sH_{B'},\mM)$, with same input space but generally different output space, one can say that `$\mN$ is not worse than $\mM$ with respect to $\model{m}$' if $p^*(\mN,\model{m})\ge p^*(\mM,\model{m})$. By extending this definition to every possible finite ensemble, we obtain the following partial ordering relation between quantum channels:
\begin{definition}\label{def:more-info}
	Given two quantum channels with the same input space $(\sH_A,\sH_B,\mN_\alpha)$ and $(\sH_A,\sH_{B'},\mN_\beta)$, we say that `$\mN_\alpha$ is \textit{more informative} than $\mN_\beta$,' and denote it as $\mN_\alpha\supseteq\mN_\beta$, whenever $p^*(\mN_\alpha,\model{m})\ge p^*(\mN_\beta,\model{m})$, for all finite ensembles $\model{m}$ on $\sH_A$.
\end{definition}

Clearly, guessing games can be also played with more than one channel arranged `in parallel,' as follows. Consider for example two quantum channels $(\sH_A,\sH_B,\mN)$ and $(\sH_{A_0},\sH_{B_0},\mM)$ and an ensemble defined on the tensor product space $\sH_A\otimes\sH_{A_0}$, i.e. $\model{n}=(\sH_A\otimes\sH_{A_0},\set{X},\ens{E})$. Then, in analogy with~(\ref{eq:guess-prob}), we have
\begin{equation}
p^*(\mN\otimes\mM,\model{n})=\max_{\povm{P}_{BB_0}}\sum_{x\in\set{X}}p_x\Tr[P^x_{BB_0}\ (\mN\otimes\mM)(\rho^x_{AA_0})].
\end{equation}
It is important to stress that, as the input states $\rho^x_{AA_0}$ can be entangled, so the elements $P^x_{BB_0}$ of the decoding POVM are allowed to act globally on the output. By means of parallelized guessing games, a stronger partial ordering relation can be introduced as follows:
\begin{definition}[Strong information ordering]\label{def:strongly-more-info}
	Given two quantum channels with the same input space  $(\sH_A,\sH_B,\mN_\alpha)$ and $(\sH_A,\sH_{B'},\mN_\beta)$, we say that `$\mN_\alpha$ is \textit{strongly more informative} than $\mN_\beta$,' and denote it as \begin{equation*}
	\mN_\alpha\supseteq_{\operatorname{s}}\mN_\beta,
	\end{equation*}whenever $\mN_\alpha\otimes\mM\supseteq\mN_\beta\otimes\mM$, for all quantum side channels $(\sH_{A_0},\sH_{B_0},\mM)$.
\end{definition}
In the above definition, we allow the comparison between $\mN_\alpha$ and $\mN_\beta$ to be made in parallel with any possible quantum side channel $(\sH_{A_0},\sH_{B_0},\mM)$ considered as an auxiliary communication resource. It is often interesting, however, to constrain the side channel to belong to some restricted class of channels, typically with reduced communication capability. As a trivial example, Definition~\ref{def:more-info} can be considered as a special case of Definition~\ref{def:strongly-more-info}, in which side channels are restricted to those which map all input states to the same output state. Here, for reasons that will be clarified later, we are in particular interested in the case in which the quantum side channel is a symmetric channel $(\sH_{A_0},\sH_{B_0},\mS)$, as introduced in Definition~\ref{def:degradable}:
\begin{definition}[Weak information ordering]\label{def:weak-more-info}
	Given two quantum channels with the same input space  $(\sH_A,\sH_B,\mN_\alpha)$ and $(\sH_A,\sH_{B'},\mN_\beta)$, we write
	\[\mN_\alpha\supseteq_{\operatorname{w}}\mN_\beta,\]
	 whenever there exists a symmetric quantum side channels $(\sH_{A_0},\sH_{B_0},\mS)$, with $\sH_{B_0}\cong\sH_{B'}$, such that $\mN_\alpha\otimes\mS\supseteq\mN_\beta\otimes\mS$.
\end{definition}
Notice that the above definition relaxes Definition~\ref{def:strongly-more-info}, not only in that the comparison can be made just with respect to symmetric side channels (rather than \textit{any} side channel), but just with respect to \textit{some} symmetric side-channel (under the sole condition $\sH_{B_0}\cong\sH_{B'}$).

The main technical result of this paper is summarised in the following theorem, for which a proof will be given in Section~\ref{sec:back-to-channels}:
\begin{theorem}\label{theo:main}
	Let $(\sH_A,\sH_B,\mN_\alpha)$ and $(\sH_A,\sH_{B'},\mN_\beta)$ be two quantum channels with the same input space $\sH_A$. Then, the following are equivalent:
	\begin{enumerate}
		\item there exists a third quantum channel $(\sH_{B},\sH_{B'},\mD)$ such that $\mN_\beta=\mD\circ\mN_\alpha$;
		\item $\mN_\alpha\supseteqs\mN_\beta$;
		\item $\mN_\alpha\supseteqw\mN_\beta$.
			\end{enumerate}
\end{theorem}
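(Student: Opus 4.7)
\emph{The easy implications.} For $(1)\Rightarrow(2)$, given $\mN_\beta=\mD\circ\mN_\alpha$ and any side channel $(\sH_{A_0},\sH_{B_0},\mM)$ together with an ensemble $\model{n}$ on $\sH_A\otimes\sH_{A_0}$, I would pull back an optimal decoding POVM $\{P^x_{B'B_0}\}$ for $\mN_\beta\otimes\mM$ through $\mD\otimes\id_{B_0}$: the operators $\{(\mD^*\otimes\id)(P^x_{B'B_0})\}$ form a valid POVM on the output of $\mN_\alpha\otimes\mM$ and achieve exactly the same winning probability, whence $p^*(\mN_\alpha\otimes\mM,\model{n})\ge p^*(\mN_\beta\otimes\mM,\model{n})$. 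The implication $(2)\Rightarrow(3)$ is then immediate: the universal quantifier over side channels in Definition~\ref{def:strongly-more-info} subsumes the existential one (over symmetric side channels of the required output dimension) in Definition~\ref{def:weak-more-info}.

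\emph{The hard implication $(3)\Rightarrow(1)$.} My plan is the two-step reduction announced in the introduction. First, in Section~\ref{sec:states} I would establish an analogue of the theorem at the level of \emph{bipartite quantum states}: for $\rho_{AB}$ and $\sigma_{AB'}$ with matching $A$-marginal, the existence of a channel $\mD:\bound(\sH_B)\to\bound(\sH_{B'})$ with $\sigma_{AB'}=(\id_A\otimes\mD)(\rho_{AB})$ should be equivalent to a suitable guessing-probability ordering of the conditional ensembles generated on the $B$- and $B'$-sides by steering measurements on $A$. Second, I would transfer this state-level statement to channels via the Choi isomorphism: since the Choi state $J(\mN)\propto (\id_A\otimes\mN)(|\Phi\>\<\Phi|)$ encodes $\mN$ faithfully, the factorization $\mN_\beta=\mD\circ\mN_\alpha$ is equivalent to $J(\mN_\beta)=(\id_A\otimes\mD)(J(\mN_\alpha))$, so condition $(1)$ is precisely the state-level post-processibility statement applied to the pair $(J(\mN_\alpha),J(\mN_\beta))$. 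The role of the symmetric side channel $\mS$ with $\sH_{B_0}\cong\sH_{B'}$ in Definition~\ref{def:weak-more-info} is to \emph{deliver} the steering ensembles demanded by the state-level theorem: by choosing ensembles on $\sH_A\otimes\sH_{A_0}$ that feed one half of a maximally entangled input into $\mS$, the dynamical guessing game on $\mN_\alpha\otimes\mS$ versus $\mN_\beta\otimes\mS$ becomes, after using the symmetry of $\mS$ to treat its output as a shared classical-quantum register for Bob and Eve, exactly a statistical comparison of $J(\mN_\alpha)$ and $J(\mN_\beta)$.

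\emph{Main obstacle.} The hard part is therefore the state-level comparison theorem itself. The natural route is convex duality: the set $\{(\id_A\otimes\mD)(\rho_{AB}):\mD\text{ a channel }\bound(\sH_B)\to\bound(\sH_{B'})\}$ is convex and compact, so if $\sigma_{AB'}$ lies outside it, a Hahn--Banach separating hyperplane exists, and a minimax argument should convert this hyperplane into a finite test ensemble exhibiting a strict reversal of guessing probabilities, contradicting the assumed ordering. The delicate point is to ensure that the dual witness produced by separation is actually realised by a legitimate quantum ensemble (non-negative weights, positive-semidefinite states) rather than an arbitrary self-adjoint functional; this is where allowing \emph{entangled} test inputs---supplied in the channel setting precisely by the public symmetric channel $\mS$---is indispensable, and where I expect the main technical work to concentrate.
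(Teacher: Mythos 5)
Your treatment of the easy implications is correct and matches the paper: $(1)\Rightarrow(2)$ by pulling an optimal POVM back through $\mD^*\otimes\id$, and $(2)\Rightarrow(3)$ because the universal quantifier subsumes the existential one. Your overall architecture for $(3)\Rightarrow(1)$ --- reduce to a comparison theorem for bipartite states with equal $A$-marginals, then transfer to channels via the Choi isomorphism and steering --- is also exactly the paper's route.

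However, the core of the argument is left as an acknowledged obstacle rather than proved, and the gap is real. Two specific devices are missing. First, hypothesis $(3)$ only gives you dominance in \emph{guessing games}, i.e.\ payoffs $\sum_x\Tr[(P^x_A\otimes Q^x_B)\,\cdot\,]$ with $\{P^x_A\}$ a genuine POVM and the diagonal utility $u(x,y)=\delta_{xy}$, whereas the separation/minimax argument you propose produces a witness that is an arbitrary Hermitian functional (equivalently, a decision game with arbitrary self-adjoint payoff operators $O^x_A$). You correctly flag that legitimizing the witness is ``where the main technical work concentrates,'' but you do not resolve it. The paper resolves it by the affine shift $P^x_A:=\frac{1}{\lambda|\set{X}|}\bigl(O^x_A+\lambda\openone_A-\frac{1}{|\set{X}|}\Sigma_A\bigr)$, which turns any decision game into a guessing game up to additive constants that cancel between the two sides \emph{precisely because} $\Tr_B\rho_{AB}=\Tr_{B'}\sigma_{AB'}$ (this is implication $(4)\Rightarrow(1)$ of Theorem~\ref{theo:comparison}; the equal-marginal hypothesis is essential, as Remark~1 notes). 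Second, even granting the full decision-game ordering, one must actually \emph{construct} the degrading channel $\mD$ from the existence of statistics-reproducing POVMs, and do so using only the single symmetric side channel allowed by Definition~\ref{def:weak-more-info}. The paper does this in Theorem~\ref{theo:equiv}: the relevant property of $\mS$ is not its symmetry per se but the \emph{completeness} of its Choi state (Lemma~\ref{lemma:complete-channels}), which supplies a spanning set $\{\omega^x_{B_0}\}$ of $\bound(\sH_{B_0})$; together with an informationally complete POVM on $A$ this upgrades the equality of outcome probabilities to the operator identity~(\ref{eq:tr-parz}), and a generalised Bell measurement on $\sH_{B'}\otimes\sH_{B_0}$ plus the teleportation correction unitaries then assemble $\mD$ explicitly from the POVM $\{R^z_{BB_0}\}$. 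Your sketch attributes the role of $\mS$ to ``delivering steering ensembles'' and ``symmetry,'' which misses that the dimension constraint $\sH_{B_0}\cong\sH_{B'}$ and completeness are what make the teleportation step work. Without these two ingredients the plan does not close.
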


% \subsection{The adversarial scenario: antidegradability and quantum public channels}

An interesting interpretation of Theorem~\ref{theo:main} is obtained when $\mN_\beta$ and $\mN_\alpha$ are taken to be the channel $\mN$ (which we wish to characterize) and its corresponding complementary channel $\mNenv$, respectively. In this situation, consider the game-theoretic scenario~\ref{game} described in the Introduction, 
in which, at each turn of the game (corresponding to each use of the channel), Bob and Eve are asked to guess the input chosen by Alice. 
In this case, it is natural to require the side-channel ${\cal{S}}$ to be symmetric, so that it serves as a public channel~\cite{SSW2008,BO2012,BO2013}, 
since it conveys the same information to Bob and Eve.

%With our notion of quantum public channels, we can then define `extended' guessing games as those played by a receiver against an Eve, in which both receiver and Eve, besides having access to, respectively, the output and the environment of the main channel (as usual in a cryptographic scenario), are allowed to use, as an extra resource, one output coming from an infinitely broadcasting side channel. Since the granted extra resource is the same for both players, extended guessing games are so that no competitor receives an `unfair' extra help with respect to the other.

Theorem~\ref{theo:main} then implies the following corollary which provides a complete characterization of antidegradable channels in the game-theoretic scenario~\ref{game}:

\begin{corollary}\label{coro:main}
	A channel is not antidegradable if and only if there exists an encoding strategy for Alice
which results in Bob winning the game \ref{game}.
\end{corollary}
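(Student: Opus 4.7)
The plan is to derive Corollary~\ref{coro:main} as a direct application of Theorem~\ref{theo:main} to the pair $\mN_\alpha := \mNenv$ and $\mN_\beta := \mN$. First, I would translate the game into the language of information orderings. Because $\mS$ is symmetric, its output is shared identically between Bob and Eve (up to isometric equivalence on its environment), so Bob effectively receives $(\mN \otimes \mS)(\rho^x_{AA_0})$ while Eve receives $(\mNenv \otimes \mS)(\rho^x_{AA_0})$; their respective optimal winning probabilities are therefore $p^*(\mN \otimes \mS, \model{n})$ and $p^*(\mNenv \otimes \mS, \model{n})$. Under the convention that Eve wins ties, ``Bob wins for some encoding strategy'' translates precisely to ``there exist a symmetric side channel $\mS$ (with $\sH_{B_0}\cong\sH_B$) and a bipartite ensemble $\model{n}$ on $\sH_A\otimes\sH_{A_0}$ such that $p^*(\mN\otimes\mS,\model{n}) > p^*(\mNenv\otimes\mS,\model{n})$''.

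For the ``only if'' direction, I would invoke (1) $\Rightarrow$ (2) of Theorem~\ref{theo:main}: if $\mN$ is antidegradable then $\mN = \mD\circ\mNenv$, so $\mNenv\supseteqs\mN$; in particular, for every side channel and every bipartite ensemble one has $p^*(\mNenv\otimes\mS,\model{n})\ge p^*(\mN\otimes\mS,\model{n})$, so Bob cannot strictly beat Eve for any encoding. For the converse, I would contrapose: if no encoding strategy lets Bob win, then in particular for some fixed symmetric $\mS$ with output space matched to $\sH_B$ one has $p^*(\mNenv\otimes\mS,\model{n}) \ge p^*(\mN\otimes\mS,\model{n})$ for all $\model{n}$, that is, $\mNenv\otimes\mS\supseteq\mN\otimes\mS$. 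By Definition~\ref{def:weak-more-info} this is exactly $\mNenv\supseteqw\mN$, and the implication (3) $\Rightarrow$ (1) of Theorem~\ref{theo:main} then supplies the required $\mD$ with $\mN = \mD\circ\mNenv$, i.e., antidegradability of $\mN$.

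All the nontrivial content is absorbed into Theorem~\ref{theo:main}; the only care required is to invoke the two appropriate directions, noting the mild asymmetry that the ``only if'' half uses the strong ordering (Eve wins against arbitrary side resources) while the ``if'' half needs only the weak ordering (a single symmetric $\mS$ suffices). I do not anticipate any genuine obstacle beyond this bookkeeping.
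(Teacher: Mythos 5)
Your proposal is correct and matches the paper's (essentially unwritten) argument exactly: the corollary is obtained by specializing Theorem~\ref{theo:main} to $\mN_\alpha=\mNenv$, $\mN_\beta=\mN$, using (1)$\Rightarrow$(2) to show antidegradability forces Eve to win for every encoding, and (3)$\Rightarrow$(1) with the weak ordering (Definition~\ref{def:weak-more-info}, requiring $\sH_{B_0}\cong\sH_B$) for the converse. The only blemish is that your labels ``only if''/``if'' are swapped relative to the statement as phrased (your first paragraph proves the ``if'' direction by contraposition), but both implications are present and correctly argued.
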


%Corollary~\ref{coro:main} provides a neat operational characterization of antidegradable channels as all and only those that are incapable of giving Bob any advantage over Eve, when the task is that of playing guessing games assisted by a quantum public side channel. 
The above corollary guarantees that any channel $\cal N$, as long as it is not antidegradable, puts Bob in a position of advantage with respect to Eve in the game~\ref{game}.

\section{From quantum channels to bipartite states...}\label{sec:states}

In this section we derive results pertaining to quantum states, which are analogues of the results stated in Theorem~\ref{theo:main} for quantum channels. We begin by recalling a fundamental relation, due to Choi~\cite{choi}, between bipartite states and channels.

\begin{theorem}[Choi Isomorphism]\label{theo:choi}
Fix an orthonormal basis $\{|i\>\}_{i=1}^d$ in  a finite-dimensional Hilbert space $\sH_A$ ($\dim\sH_A=d$). Define the standard maximally entangled state 
$|\Phi^+\>:=d^{-1/2}\sum_{i=1}^{d}|i\>\otimes|i\>\in\sH_A\otimes\sH_A$. Then, any channel $(\sH_A,\sH_B,\mN)$ defines a bipartite state $\rho_{AB}^\mN\in\sS(\sH_A\otimes\sH_B)$ with $\Tr_B[\rho_{AB}^\mN]=d^{-1}\openone_A$ via the relation:
	\[
	\rho_{AB}^\mN:=(\id\otimes\mN)(|\Phi^+\>\<\Phi^+|).
	\]
	Conversely, any bipartite state $\rho_{AB}\in\sS(\sH_A\otimes\sH_B)$ with $\Tr_B[\rho_{AB}]=d^{-1}\openone_A$ defines a channel $(\sH_A,\sH_B,\mN^\rho)$ via the relation:
	\[
	\mN^\rho(X):=d\Tr_A[(X^T\otimes\openone_B)\ \rho_{AB}],
	\]
	for all $X\in\bound(\sH_A)$,
	where the transposition is taken with respect to the fixed basis $\{|i\>\}_{i=1}^d$. The correspondence is one-to-one, i.e.
	\[
	d\Tr_A[(X^T\otimes\openone_B)\ \rho^\mN_{AB}]=\mN(X),
	\]
	and
	\[
	(\id\otimes\mN^\rho)(|\Phi^+\>\<\Phi^+|)=\rho_{AB}.
	\]
\end{theorem}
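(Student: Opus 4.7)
The plan is to establish the two constructions as well-defined maps between CPTP channels and bipartite states with maximally mixed $A$-marginal, and then to verify they are mutually inverse. The entire argument rests on one elementary ``ricochet'' identity, $(X \otimes \openone)|\Phi^+\> = (\openone \otimes X^T)|\Phi^+\>$, which I would verify first by expanding $|\Phi^+\>$ in the fixed basis $\{|i\>\}$. From it one obtains the corollary $\Tr_A[(X^T \otimes \openone_B)|\Phi^+\>\<\Phi^+|] = d^{-1} X$, which will do essentially all of the subsequent work.

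For the channel-to-state direction, I would check that $\rho^\mN_{AB} := (\id \otimes \mN)(|\Phi^+\>\<\Phi^+|)$ lies in $\sS(\sH_A \otimes \sH_B)$ with $\Tr_B[\rho^\mN_{AB}] = d^{-1}\openone_A$. Positivity is immediate from complete positivity of $\mN$ applied to the positive input $|\Phi^+\>\<\Phi^+|$. For the marginal, trace preservation of $\mN$ implies that partial-tracing the output of $\id \otimes \mN$ over $\sH_B$ is the same as partial-tracing the input over $\mN$'s input factor; applying this to $|\Phi^+\>\<\Phi^+|$, whose reduction on either factor is $d^{-1}\openone$, gives the claim, and simultaneously confirms normalization $\Tr[\rho^\mN_{AB}] = 1$.

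For the state-to-channel direction, given $\rho_{AB}$ with $\Tr_B[\rho_{AB}] = d^{-1}\openone_A$, I would verify that $\mN^\rho(X) := d\Tr_A[(X^T \otimes \openone_B)\rho_{AB}]$ is linear and CPTP. Linearity is obvious; trace preservation is a one-line computation using the marginal condition. For complete positivity, the cleanest route is to purify $\rho_{AB}$ to a pure tripartite vector; since the $A$-marginal is maximally mixed, purification uniqueness lets one write this purification as $(\openone_A \otimes V)|\Phi^+\>$ for some isometry $V:\sH_A \to \sH_B \otimes \sH_E$. Substituting this expression into the defining formula for $\mN^\rho$ and applying the ricochet corollary collapses it to the Stinespring form $\mN^\rho(X) = \Tr_E[VXV^\dag]$, which is manifestly completely positive.

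Finally, that the two constructions are mutually inverse amounts to two short calculations. The identity $\mN^{\rho^\mN}(X) = \mN(X)$ follows by pulling $\mN$ through the partial trace over $A$ (legitimate since $\mN$ acts on the other tensor factor) and then invoking the ricochet corollary on the remaining expression $\Tr_A[(X^T \otimes \openone)|\Phi^+\>\<\Phi^+|]$; the reverse composition $(\id \otimes \mN^\rho)(|\Phi^+\>\<\Phi^+|) = \rho_{AB}$ is confirmed by a direct bookkeeping calculation in the fixed basis, or equivalently by re-applying the already-established channel-to-state direction to $\mN^\rho$ and matching coefficients. I anticipate no serious obstacle; the only subtlety worth tracking is the basis-dependence introduced by the transpose, which is the unique nontrivial feature of the whole isomorphism.
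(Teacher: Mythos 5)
The paper itself gives no proof of this theorem: it is stated as a known result with a citation to Choi's original article, so there is no in-paper argument to compare against. Your proposal is nonetheless a correct and complete proof of the standard kind. The ricochet identity $(X\otimes\openone)|\Phi^+\>=(\openone\otimes X^T)|\Phi^+\>$ and its partial-trace corollary $\Tr_A[(X^T\otimes\openone)\,|\Phi^+\>\<\Phi^+|]=d^{-1}X$ do indeed carry all four verifications (positivity and marginal of $\rho^\mN_{AB}$, CPTP of $\mN^\rho$, and the two inverse compositions). The only step deserving explicit care is complete positivity of $\mN^\rho$: your purification argument works precisely because the $A$-marginal $d^{-1}\openone_A$ has full rank $d$, which is what guarantees that the two purifications of it ($|\Phi^+\>$ with purifying space $\sH_A$, and a purification of $\rho_{AB}$ with purifying space $\sH_B\otimes\sH_E$) are related by a genuine isometry $V:\sH_A\to\sH_B\otimes\sH_E$; substituting and applying the ricochet corollary then yields the Stinespring form $\mN^\rho(X)=\Tr_E[VXV^\dag]$, as you say. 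An equally standard alternative, closer to Choi's original argument, is to diagonalize $\rho_{AB}=\sum_k|v_k\>\<v_k|$ and read Kraus operators of $\mN^\rho$ directly off the (unnormalized) eigenvectors; either route is fine, and I see no gap in yours.
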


With the Choi isomorphism at hand, we will reformulate Theorem~\ref{theo:main} as a result about the comparison of quantum bipartite states, rather than channels, in the spirit of Ref.~\cite{q-black}.

\subsection{Statistical comparison of bipartite quantum states}

As in Ref.~\cite{shmaya,chefles,q-black}, we can characterize bipartite quantum states in terms of the following game-theoretical scenarios:

\begin{itemize}
	\item Quantum Statistical Decision Games: these are defined by
an outcome set $\set{X}=\{x\}$ and a family of self-adjoint
	operators $\{O^x_A\}_{x\in\set{X}}$; given a bipartite quantum state $\rho_{AB}\in\sS(\sH_A\otimes\sH_B)$, its payoff with respect to a quantum statistical decision game is given by
	\begin{equation}
	\max_{\povm{Q}_B}\sum_{x} \Tr[(O^x_A\otimes
	Q^x_{B})\ \rho_{AB}].
	\end{equation}
	
	\item Quantum Statistical Decision Problems: these are defined by two
	outcome sets $\set{X}=\{x\}$ and $\set{Y}=\{y\}$, a POVM
	$\{P^x_A\}_{x\in\set{X}}$, and a utility function
	$u:\set{X}\times\set{Y}\to\mathbb{R}$; given a bipartite quantum state $\rho_{AB}\in\sS(\sH_A\otimes\sH_B)$, its payoff with respect to a quantum statistical decision problem is given by
	\begin{equation}
	\max_{\povm{Q}_B}\sum_{x,y}u(x,y) \Tr[(P^x_A\otimes
	Q^y_{B})\ \rho_{AB}].
	\end{equation}
		
	\item Static Guessing Games: these are defined by an outcome set $\set{X}=\{x\}$ and a
	POVM $\{P^x_A\}_{x\in\set{X}}$; given a bipartite quantum state $\rho_{AB}\in\sS(\sH_A\otimes\sH_B)$, its payoff with respect to a guessing game is given by
	\begin{equation}\label{eq:static-guessing}
	\max_{\povm{Q}_B}\sum_{x} \Tr[(P^x_A\otimes
	Q^x_{B})\ \rho_{AB}].
	\end{equation}
	\end{itemize} 
	
	As done in Definition~\ref{def:more-info}, where quantum channels are compared with respect to their `utility' in playing guessing games, we can compare bipartite states in terms of their `utilities' in playing the three kinds of statistical games we introduced above. The following theorem states that, in the case in which we are to compare two bipartite states  $\rho_{AB}\in\sS(\sH_A\otimes\sH_B)$ and
	$\sigma_{AB'}\in\sS(\sH_A\otimes\sH_{B'})$, such that $\Tr_B\rho_{AB}=\Tr_{B'}\sigma_{AB'}$, the corresponding partial ordering relations are all equivalent.

	\begin{theorem}\label{theo:comparison}
		Let $\rho_{AB}\in\sS(\sH_A\otimes\sH_B)$ and
		$\sigma_{AB'}\in\sS(\sH_A\otimes\sH_{B'})$ be such that
		$\Tr_B\rho_{AB}=\Tr_{B'}\sigma_{AB'}$. Then, the following
		statements are equivalent:
		
		\begin{enumerate}
			
			\item Comparison by quantum statistical decision games. For
			any outcome set $\set{X}=\{x\}$ and for any set of self-adjoint
			operators $\{O^x_A\}_{x\in\set{X}}$,
			\begin{equation}\label{eq1}
			\max_{\povm{R}_B}\sum_{x} \Tr[(O^x_A\otimes
			R^x_{B})\ \rho_{AB}]\ge  \max_{\povm{Q}_{B'}}\sum_{x} \Tr[(O^x_A\otimes
			Q^x_{B'})\ \sigma_{AB'}];
			\end{equation}
			
			\item Comparison by quantum statistical decision problems. For any
			outcome sets $\set{X}=\{x\}$ and $\set{Y}=\{y\}$, for any POVM
			$\{P^x_A\}_{x\in\set{X}}$, and for any utility function
			$u:\set{X}\times\set{Y}\to\mathbb{R}$,
			\begin{equation}\label{eq2}
			\max_{\povm{R}_B}\sum_{x,y}u(x,y) \Tr[(P^x_A\otimes
			R^y_{B})\ \rho_{AB}]\ge  \max_{\povm{Q}_{B'}}\sum_{x,y}u(x,y) \Tr[(P^x_A\otimes
			Q^y_{B'})\ \sigma_{AB'}];
			\end{equation}
			
			\item Comparison by the Hahn-Banach separation theorem. For any outcome sets $\set{X}=\{x\}$ and $\set{Y}=\{y\}$, for any
			POVMs $\{P^x_A\}_{x\in\set{X}}$ and $\{Q^y_{B'}\}_{y\in\set{Y}}$, there
			exists a POVM $\{R^y_B\}_{y\in\set{Y}}$ such that
			\begin{equation}\label{eq3}
			\Tr[(P^x_A\otimes
			R^y_{B})\ \rho_{AB}]= \Tr[(P^x_A\otimes Q^y_{B'})\
			\sigma_{AB'}]\qquad\forall x,y;
			\end{equation}
			
			\item Comparison by guessing games. For any outcome set $\set{X}=\{x\}$ and for any
			POVM $\{P^x_A\}_{x\in\set{X}}$,
			\begin{equation}\label{eq4}
			\max_{\povm{R}_B}\sum_{x} \Tr[(P^x_A\otimes
			R^x_{B})\ \rho_{AB}]\ge  \max_{\povm{Q}_{B'}}\sum_{x} \Tr[(P^x_A\otimes
			Q^x_{B'})\ \sigma_{AB'}].
			\end{equation}
			
		\end{enumerate}
		
	\end{theorem}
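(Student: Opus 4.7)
My plan is to close the cycle $3 \Rightarrow 1 \Rightarrow 2 \Rightarrow 4 \Rightarrow 3$. Three of these arrows collapse almost by design, while the closing arrow $4 \Rightarrow 3$ is a Hahn--Banach separation combined with a Blackwell-style dilation that reduces decision problems to guessing games; this is where I expect the only real obstacle to lie.

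For the easy steps: $3 \Rightarrow 1$ proceeds by fixing an informationally complete POVM $\{P^z_A\}_{z\in\set{Z}}$ on $\sH_A$ and expanding each self-adjoint $O^x_A = \sum_z c^{xz}\, P^z_A$. Letting $\{Q^x_{B'}\}_x$ attain the maximum on the right of~(\ref{eq1}), statement 3 applied to the POVMs $\{P^z_A\}_z$ and $\{Q^x_{B'}\}_x$ produces a POVM $\{R^x_B\}_x$ with $\Tr[(P^z_A\otimes R^x_B)\rho_{AB}] = \Tr[(P^z_A\otimes Q^x_{B'})\sigma_{AB'}]$ for every $x$ and $z$; linearly combining with the $c^{xz}$ yields an exact payoff equality, hence~(\ref{eq1}). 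Next, $1 \Rightarrow 2$ is the special case $O^y_A := \sum_x u(x,y)\, P^x_A$, and $2 \Rightarrow 4$ is the special case $\set{Y} = \set{X}$, $u(x,y) = \delta_{xy}$.

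For the hard step $4 \Rightarrow 3$, fix POVMs $\{P^x_A\}_{x\in\set{X}}$ and $\{Q^y_{B'}\}_{y\in\set{Y}}$ and set $s^{xy} := \Tr[(P^x_A\otimes Q^y_{B'})\,\sigma_{AB'}]$. The set
\[
C := \left\{\bigl(\Tr[(P^x_A\otimes R^y_B)\,\rho_{AB}]\bigr)_{x,y} : \{R^y_B\}_{y\in\set{Y}}\ \text{a POVM on}\ \sH_B\right\} \subset \mathbb{R}^{|\set{X}|\cdot|\set{Y}|}
\]
is compact and convex. Supposing $(s^{xy}) \notin C$, Hahn--Banach furnishes real coefficients $\{\lambda^{xy}\}$ and a threshold $\kappa$ with $\sum_{xy}\lambda^{xy} s^{xy} > \kappa \ge \sum_{xy}\lambda^{xy}\,\Tr[(P^x_A\otimes R^y_B)\,\rho_{AB}]$ for every POVM $\{R^y_B\}$. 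Because $\sum_{xy}\Tr[(P^x_A\otimes R^y_B)\rho_{AB}] = \sum_{xy} s^{xy} = 1$, shifting all $\lambda^{xy}$ by a common constant preserves the gap, so I may assume $\lambda^{xy}\ge 0$. Rescaling by a small factor and adjoining a single ``null'' outcome $*$ converts $\{\lambda^{xy} P^x_A\}_{(x,y)}$ into a bona fide POVM on $\sH_A$ indexed by $(\set{X}\times\set{Y})\cup\{*\}$, producing a guessing game whose Hahn--Banach gap is inherited by the guessing-probability payoffs of $\rho$ and $\sigma$, contradicting 4.

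The main obstacle is the careful execution of this final dilation: a priori the decoder in the lifted guessing game enjoys strictly more freedom than what the Hahn--Banach functional controls, since a decoding POVM element for outcome $(x,y)$ may depend on both $x$ and $y$, whereas the elements of $C$ are indexed only by $y$. Controlling this extra freedom---so that the lifted payoff on $\rho$ is bounded exactly by $\kappa$ while the null-outcome contribution cancels identically on both sides thanks to $\Tr_B\rho_{AB} = \Tr_{B'}\sigma_{AB'}$---is the delicate step; I expect it to be handled by an appropriately chosen Naimark-type extension of the A-side POVM together with the shared-marginal identity, which is exactly what makes the null contribution transparent to the separation gap.
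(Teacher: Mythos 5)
Your three ``easy'' arrows are sound: $1\Rightarrow 2$ and $2\Rightarrow 4$ coincide with the paper's, and your $3\Rightarrow 1$ (expanding each $O^x_A$ in an informationally complete POVM and invoking the exact statistical matching of statement 3) is a correct, slightly different route from the paper, which instead gets $3$ from $2$ by citing the convex separation theorem. The genuine gap is in your closing arrow $4\Rightarrow 3$, and it is exactly the obstacle you flag without resolving. After Hahn--Banach gives you nonnegative coefficients $\lambda^{xy}$ separating $(s^{xy})$ from the set $C$, your conversion into a guessing game enlarges the outcome set to $(\set{X}\times\set{Y})\cup\{*\}$ with A-side elements proportional to $\lambda^{xy}P^x_A$. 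But then the B-side decoder in the lifted guessing game consists of elements $R^{(x,y)}_B$ indexed by \emph{pairs}, whereas the functional you control only bounds $\sum_{x,y}\lambda^{xy}\Tr[(P^x_A\otimes R^y_B)\,\rho_{AB}]$ over decoders $\{R^y_B\}_y$ indexed by $y$ alone. The supremum of the lifted guessing payoff on $\rho$ over the larger class of decoders can strictly exceed (the rescaling of) $\kappa$, so statement 4 is not contradicted. A ``Naimark-type extension'' cannot repair this: the excess freedom lives in the classical indexing of the decoder, not in the dilation of any POVM. The null-outcome bookkeeping has a similar problem, since on the $\rho$ side the optimizer may shift weight onto $R^*_B$.

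The fix is to collapse the $x$-index \emph{on the A side before} forming the guessing game: set $O^y_A:=\sum_x\lambda^{xy}P^x_A$ (positive once $\lambda^{xy}\ge 0$), so the separating functional becomes a statistical decision game indexed by $y$ alone, and then renormalize $\{O^y_A\}_y$ affinely into a POVM via
\[
\tilde P^y_A:=\frac{1}{\mu|\set{Y}|}\Big(O^y_A+\mu\openone_A-\tfrac{1}{|\set{Y}|}\textstyle\sum_{y'}O^{y'}_A\Big),
\]
with $\mu$ large enough for positivity. The resulting guessing game has outcome set $\set{Y}$, so its decoder has exactly the freedom appearing in $C$, and the affine correction terms contribute the same constant to the $\rho$- and $\sigma$-payoffs precisely because $\Tr_B\rho_{AB}=\Tr_{B'}\sigma_{AB'}$. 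This is the paper's $(4)\Rightarrow(1)$ argument; with it your cycle closes as $4\Rightarrow 1\Rightarrow 2\Rightarrow 3$ (the last arrow by separation), rather than by a direct $4\Rightarrow 3$.
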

	
	\begin{proof}
		The relation (1)$\Rightarrow$(2) holds because any specification of
		an outcome set $\set{X}$ together with a utility function $u$ defines,
		in particular, a set of self-adjoint operators
		$\{O^y_A\}_{y\in\set{Y}}$, by the summation
		$O^y_A:=\sum_xu(x,y)P^x_A$. Hence,
		\begin{eqnarray}
		{\hbox{RHS of \reff{eq2}}} &=&  \max_{\povm{Q}_{B'}}\sum_{y}
		\Tr[((\sum_x u(x,y)P^x_A)\otimes
		Q^y_{B'})\ \sigma_{AB'}]\nonumber\\
		&=& \max_{\povm{Q}_{B'}}\sum_{y}
		\Tr\left[(O^y_A \otimes
		Q^y_{B'})\ \sigma_{AB'}\right]\nonumber\\
		&\le &   \max_{\povm{R}_B}\sum_{y} \Tr\left[(O^y_A\otimes
		R^y_{B})\ \rho_{AB}\right]\nonumber\\
		&=&   \max_{\povm{R}_B}\sum_{x,y}u(x,y) \Tr[(P^x_A\otimes
		R^y_{B})\ \rho_{AB}]\nonumber\\
		&=& {\hbox{LHS of \reff{eq2}}},
		\end{eqnarray}
		where the inequality follows from (1) and the third equality follows from
		the definition of $O^y_{A}$.
		
		The relation (2)$\Leftrightarrow$(3) holds as a consequence of the
		separation theorem for convex sets (for a detailed discussion on this
		point see, for example, Ref.~\cite{q-black}).
		
		The relation (2)$\Rightarrow$(4) holds simply by taking
		$u(x,y)=\delta_{xy}$ in \reff{eq2}.
		
		The relation (4)$\Rightarrow$(1) (which would complete the proof of
		equivalence) can be established as follows: Given an outcome set
		$\set{X}$ and a set of self-adjoint operators $\{O^x_A\}_{x\in\set{X}}$,
		let us define the following operators for $x \in \set{X}$:
		\begin{equation*}
		P^x_A:=\frac 1{\lambda}\frac
		1{|\set{X}|}\left\{O^x_A+\lambda\openone_A-\frac 1{|\set{X}|}\Sigma_A\right\},
		\end{equation*}
		where $\Sigma_A:=\sum_xO^x_A$ and $0< \lambda < \infty$ is chosen such
		that $P^x_A \ge 0$ for all $x$. By construction $\sum_xP^x_A=\openone_A$, and hence
		$\{P^x_A\}_{x \in \set{X}}$ is a POVM. For each $x \in \set{X}$, then,
		\begin{equation}\label{ox}
		O^x_A = \lambda |\set{X}| P^x_A - \lambda \openone_A + \frac{1}{|\set{X}|}\Sigma_A.
		\end{equation}
		Substituting \reff{ox} on the RHS of \reff{eq1} we get
		\begin{eqnarray} {\hbox{RHS of \reff{eq1}}} &=& \max_{\povm{Q}_{B'}}\sum_{x}\Tr\left\{\left[\left(\lambda |\set{X}| P^x_A - \lambda \openone_A + \frac{1}{|\set{X}|}\Sigma_A\right) \otimes   Q^x_{B'}\right]\sigma_{AB'}\right\}
		\nonumber\\
		&=& \lambda |\set{X}| \max_{\povm{Q}_{B'}}\left\{\sum_{x}
		\Tr[(P^x_A\otimes
		Q^x_{B'})\ \sigma_{AB'}]\right\} - \lambda +  \frac{1}{|\set{X}|} \Tr \Sigma_A \rho_A,\nonumber\\
		&\le & \lambda |\set{X}| \max_{\povm{R}_B}\left\{\sum_{x}
		\Tr[(P^x_A\otimes
		R^x_{B})\ \rho_{AB}]\right\} - \lambda + \frac{1}{|\set{X}|} \Tr \Sigma_A \rho_A\nonumber\\
		&=& \max_{\povm{R}_B}\sum_{x}
		\Tr\left\{\left[\left(\lambda |\set{X}| P^x_A - \lambda \openone_A + \frac{1}{|\set{X}|}\Sigma_A\right) \otimes   R^x_{B}\right]\rho_{AB}\right\}\nonumber\\
		&=& \max_{\povm{R}_B}\sum_{x} \Tr[(O^x_A\otimes
		R^x_{B})\ \rho_{AB}]\label{second-last}\\
		&=& {\hbox{LHS of \reff{eq1}}}\nonumber,
		\end{eqnarray}
		where the second equality follows from the facts that $\sum_x Q^x_{B'}
		= \openone_{B'}$ and $\Tr_{B'} \sigma_{AB'} =\Tr_B\rho_{AB}\equiv \rho_A$, the inequality
		follows from \reff{eq4}, and~\reff{second-last} follows from \reff{ox}.
	\end{proof}

	\begin{remark}
		The condition $\Tr_B\rho_{AB}=\Tr_{B'}\sigma_{AB'}$ is crucial for
		the validity of Theorem~\ref{theo:comparison}. If this condition is
		dropped, statements (1), (2), and (3) are still equivalent, while
		statement (4) becomes only a \emph{necessary condition} for the
		validity of the previous three~\cite{q-black}.
	\end{remark}
	
	We can then introduce the following definition:
	\begin{definition}
		Let $\rho_{AB}\in\sS(\sH_A\otimes\sH_B)$ and
		$\sigma_{AB'}\in\sS(\sH_A\otimes\sH_{B'})$ be such that
		$\Tr_B\rho_{AB}=\Tr_{B'}\sigma_{AB'}$. We say that $\rho_{AB}$ is
		\emph{more informative} than $\sigma_{AB'}$, written as
		\begin{equation*}
		\rho_{AB}\supseteq_A\sigma_{AB'},
		\end{equation*}
		if and only if any one of the four statements in
		Theorem~\ref{theo:comparison} holds.
	\end{definition}
	
Finally, as channels can be arranged in parallel and used to play parallelized guessing games (see Definitions~\ref{def:strongly-more-info} and~\ref{def:weak-more-info}), bipartite states too can be put in parallel and compared in a similar manner. For example, given two quantum states $\rho_{AB}\in\sS(\sH_A\otimes\sH_B)$ and
$\sigma_{AB'}\in\sS(\sH_A\otimes\sH_{B'})$ such that
$\Tr_B\rho_{AB}=\Tr_{B'}\sigma_{AB'}$, let $\omega_{A_0B_0}\in\sS(\sH_{A_0}\otimes\sH_{B_0})$ be a third auxiliary bipartite state. Then we can write
\[
\rho_{AB}\otimes\omega_{A_0B_0}\supseteq_{AA_0}\sigma_{AB'}\otimes\omega_{A_0B_0},
\] 
with the meaning that for any outcome set $\set{X}=\{x\}$ and for any POVM $\{P^x_{AA_0}\}_x$,
\[
\max_{\povm{R}_{BB_0}}\sum_{x} \Tr[(P^x_{AA_0}\otimes
R^x_{BB_0})\ (\rho_{AB}\otimes\omega_{A_0B_0})]\ge  \max_{\povm{Q}_{B'B_0}}\sum_{x} \Tr[(P^x_{AA_0}\otimes
Q^x_{B'B_0})\ (\sigma_{AB'}\otimes\omega_{A_0B_0})].
\]
The above equation directly generalizes Eq.~(\ref{eq4}) in Theorem~\ref{theo:comparison}. Along the same line, Eqs.~(\ref{eq1}), (\ref{eq2}), and~(\ref{eq3}) can also be generalized.
	
	\subsection{Local degradability of bipartite states}
	
	Another partial ordering relation between bipartite states can be introduced as follows:
	
	\begin{definition}[Local degradability]\label{def:suff}
		Given two quantum states $\rho_{AB}$ and $\sigma_{AB'}$ such that
		$\Tr_B\rho_{AB}=\Tr_{B'}\sigma_{AB'}$, we say that $\rho_{AB}$
		can be \emph{locally degraded} to $\sigma_{AB'}$, written as
		\begin{equation}
		\rho_{AB}\suff\sigma_{AB'},
		\end{equation}
		if and only if there exists a channel $(\sH_B,\sH_{B'},\mD)$ such that
		\begin{equation}
		\sigma_{AB'}=(\id_A\otimes\mD_B)(\rho_{AB}).
		\end{equation}
	\end{definition}
	
	In Ref.~\cite{q-black}, a fundamental equivalence relation between the two orderings $\supseteq$ and $\suff$ is proved. In what follows, we introduce all the ideas we need in order to adapt the equivalence relation of~\cite{q-black} to the present case.
	
	\begin{definition}[Local state space and complete states~\cite{faithful,q-black}]\label{def:complete} Given a
		bipartite state $\rho_{AB}\in\sS(\sH_A\otimes\sH_B)$, its
		\emph{local state space} $\sS_B(\rho_{AB})\subseteq\sS(\sH_B)$ is the
		convex set defined as follows:
		\begin{equation*}
		\sS_B(\rho_{AB})=\sS(\sH_B)\cap\left\{\Tr_A[(P_A\otimes\openone_B)\
		\rho_{AB}]\mid 0\le P_A\in\bound(\sH_A)\right\}.
		\end{equation*}
		Whenever $\sS_B(\rho_{AB})$ contains $(\dim\sH_B)^2$
		linearly independent elements, then $\rho_{AB}$ is said to be $B$-\emph{complete} (or, simply, \emph{complete}).
	\end{definition}
	Examples of complete bipartite states in $\sS(\sH_A\otimes\sH_B)$ are given by states of the form $p|\Phi^+_{AB}\>\<\Phi^+_{AB}|+\frac{(1-p)}{d_Ad_B}\openone_{AB}$, where $|\Phi^+_{AB}\>$ is a maximally entangled state in $\sH_A\otimes\sH_B$, for any $0<p\le 1$. We now prove a fact that will turn out to be useful later on:
\begin{lemma}\label{lemma:complete}
	A bipartite state $\rho_{AB}\in\sS(\sH_A\otimes\sH_B)$ is
	$B$-complete if and only if there exists a POVM $\{P^x_A\}_x$ on
	$\sH_A$ such that the set $\{\rho^x_B\}_x$, where
	$\rho^x_B:=\Tr_A[(P^x_A\otimes\openone_B)\ \rho_{AB}]$, contains $(\dim\sH_B)^2$
	linearly independent elements.
\end{lemma}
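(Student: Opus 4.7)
\medskip

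\noindent\textbf{Proof plan for Lemma~\ref{lemma:complete}.}

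The plan is to prove the two implications separately, using nothing more than a rescaling argument to bridge between arbitrary positive operators (as in the definition of $\sS_B(\rho_{AB})$) and POVM elements.

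For the ``if'' direction, suppose a POVM $\{P^x_A\}_{x\in\set{X}}$ is given such that the operators $\rho^x_B:=\Tr_A[(P^x_A\otimes\openone_B)\,\rho_{AB}]$ contain $d_B^2:=(\dim\sH_B)^2$ linearly independent elements. Set $p_x:=\Tr[\rho^x_B]\ge 0$; the $d_B^2$ linearly independent $\rho^x_B$ must each satisfy $p_x>0$ (since $\rho^x_B\ge 0$ and trace zero would force $\rho^x_B=0$). Dividing by $p_x$ yields the normalized states $\rho^x_B/p_x\in\sS_B(\rho_{AB})$ corresponding to the positive operators $P^x_A/p_x\ge 0$, and since dividing linearly independent operators by nonzero scalars preserves linear independence, these $d_B^2$ normalized states are still linearly independent. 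Hence $\rho_{AB}$ is $B$-complete.

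For the ``only if'' direction, assume $\rho_{AB}$ is $B$-complete, so there exist positive operators $T^1_A,\dots,T^{d_B^2}_A\in\bound(\sH_A)$, $T^i_A\ge 0$, such that the normalized states $\sigma^i_B:=\Tr_A[(T^i_A\otimes\openone_B)\,\rho_{AB}]/q_i$, with $q_i:=\Tr[(T^i_A\otimes\openone_B)\,\rho_{AB}]>0$, are linearly independent in $\sS(\sH_B)$. Equivalently (multiplying back by $q_i$), the un-normalized operators $\tau^i_B:=\Tr_A[(T^i_A\otimes\openone_B)\,\rho_{AB}]$ are linearly independent. The operators $T^i_A$ need not form a POVM, so I will rescale and pad. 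Pick any $\lambda>0$ small enough that $\lambda\sum_{i=1}^{d_B^2}T^i_A\le\openone_A$ (possible because $\sum_i T^i_A$ is a fixed positive operator with a finite operator norm), and define
\[
P^i_A:=\lambda\, T^i_A\quad(i=1,\dots,d_B^2),\qquad P^{d_B^2+1}_A:=\openone_A-\lambda\sum_{i=1}^{d_B^2}T^i_A.
\]
By construction $\{P^i_A\}_{i=1}^{d_B^2+1}$ is a POVM on $\sH_A$, and the corresponding outputs satisfy $\Tr_A[(P^i_A\otimes\openone_B)\,\rho_{AB}]=\lambda\,\tau^i_B$ for $i\le d_B^2$. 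These are again $d_B^2$ linearly independent elements, as required.

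The only thing that requires any care is the rescaling step in the second direction: one has to make sure the linearly independent witnesses of completeness, originally merely positive operators in $\bound(\sH_A)$, can be turned into elements of a genuine POVM without collapsing their linear independence. This is handled by a uniform rescaling followed by the addition of a single slack element $\openone_A-\lambda\sum_iT^i_A$, which affects neither the independence of the first $d_B^2$ outputs nor the normalization condition $\sum_x P^x_A=\openone_A$.
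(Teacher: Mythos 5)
Your proof is correct and follows essentially the same route as the paper's: normalization of the POVM elements by the traces of the induced states in one direction, and a uniform rescaling by $\lambda$ plus a single completing element $\openone_A-\lambda\sum_i T^i_A$ in the other. The only (welcome) refinement is your explicit remark that a linearly independent $\rho^x_B\ge 0$ must have strictly positive trace, which justifies the division the paper performs implicitly.
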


\begin{proof}
	Suppose that there exists a POVM $\{P^x_A\}_x$ on $\sH_A$ such that
	the set $\{\rho^x_B\}_x$, where
	$\rho^x_B:=\Tr_A[(P^x_A\otimes\openone_B)\ \rho_{AB}]$, contains $(\dim\sH_B)^2$
	linearly independent elements. Then, define
	positive operators as follows:
	\begin{equation*}
	\tilde P^x_A:=\frac{P^x_A}{\Tr[\rho^x_B]},
	\end{equation*}
	and, correspondingly, $\tilde\rho^x_B:=\Tr_A[(\tilde
	P^x_A\otimes\openone_B)\ \rho_{AB}]$. Clearly, all
	$\tilde\rho^x_B$ belong to $\sS_B(\rho_{AB})$, and they are linearly independent if and only if the $\rho^x_B$ are. Therefore $\sS_B(\rho_{AB})$ contains $(\dim\sH_B)^2$
	linearly independent elements, i.e.
	$\rho_{AB}$ is $B$-complete.
	
	Conversely, suppose that $\rho_{AB}$ is $B$-complete. Then, there
	exist $(\dim\sH_B)^2$ positive operators $P^x_A$ such that all
	$\rho^x_B=\Tr_A[(P^x_A\otimes\openone_B)\
	\rho_{AB}]\in\sS_B(\rho_{AB})$ are linearly independent. However,
	$\sum_xP^x_A\neq\openone_A$, i.e. the operators $\{P^x_A\}_x$, even
	though positive, do not constitute, in general, a POVM. Let then $\lambda$
	be any strictly positive number such that $\lambda\sum_xP^x_A\le\openone_A$, and
	define $\tilde P^x_A:=\lambda P^x_A$, and $\tilde
	P^\infty_A:=\openone_A-\sum_x \tilde P^x_A$. Then, the set $\{\tilde
	P^x_A\}_x\cup\{\tilde P^\infty_A\}$ constitutes a well defined
	POVM. Define also $\tilde\rho^x_B:=\Tr_A[(\tilde
	P^x_A\otimes\openone_B)\ \rho_{AB}]$. Since the first $(\dim\sH_B)^2$ elements of $\{\tilde\rho^x_B\}$ are linearly independent if and only if the
	$\rho^x_B$ are, we have that the set $\{\tilde\rho^x_B\}$ surely contains $(\dim\sH_B)^2$ linearly independent elements.
\end{proof}
	
	\begin{theorem}[Comparison of bipartite quantum
		states~\cite{q-black,semi-quantum}]\label{theo:equiv}
		Given two bipartite quantum states
		$\rho_{AB}\in\sS(\sH_A\otimes\sH_B)$ and
		$\sigma_{AB'}\in\sS(\sH_A\otimes\sH_{B'})$ with
		$\Tr_B\rho_{AB}=\Tr_{B'}\sigma_{AB'}$, the following are equivalent:
		\begin{enumerate}
			\item $\rho_{AB}\suff\sigma_{AB'}$;
			\item for any $\sH_{A_0}$, any $\sH_{B_0}$, and any auxiliary bipartite state
			$\omega_{A_0B_0}\in\sS(\sH_{A_0}\otimes\sH_{B_0})$,\newline
			\mbox{$\rho_{AB}\otimes\omega_{A_0B_0}\supseteq_{AA_0}
				\sigma_{AB'}\otimes\omega_{A_0B_0}$;}
			\item for some $B_0$-complete state $\omega_{A_0B_0}$, with
			$\sH_{B_0}\cong\sH_{B'}$,
			\newline\mbox{$\rho_{AB}\otimes\omega_{A_0B_0}\ \supseteq_{AA_0}\
				\sigma_{AB'}\otimes\omega_{A_0B_0}$.}
		\end{enumerate}
	\end{theorem}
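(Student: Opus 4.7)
The plan is to prove the cycle (1) $\Rightarrow$ (2) $\Rightarrow$ (3) $\Rightarrow$ (1), with the first two implications being essentially formal and (3) $\Rightarrow$ (1) carrying the real content. For (1) $\Rightarrow$ (2), I would argue as follows: if $\sigma_{AB'}=(\id_A\otimes\mD)(\rho_{AB})$ for some channel $\mD:\sH_B\to\sH_{B'}$, then for any $\omega_{A_0B_0}$ and any POVM $\{Q^x_{B'B_0}\}_x$ the family $R^x_{BB_0}:=(\mD^*\otimes\id_{B_0})(Q^x_{B'B_0})$ is a POVM on $BB_0$ (since $\mD^*$ is CP and unital), and adjointness yields equal joint statistics on the tensored states, so the $\rho$-side maximum dominates the $\sigma$-side expression and the comparison in (2) follows. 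The step (2) $\Rightarrow$ (3) is trivial, noting that $B_0$-complete states with $\sH_{B_0}\cong\sH_{B'}$ exist (for instance, a slightly noisy maximally entangled state on $\sH_{B'}\otimes\sH_{B'}$, as in the remark after Definition~\ref{def:complete}).

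For (3) $\Rightarrow$ (1) my plan is to combine the equivalent form (3) of Theorem~\ref{theo:comparison}, applied to the tensored states (whose marginal hypothesis is preserved under tensoring with $\omega_{A_0B_0}$), with Lemma~\ref{lemma:complete}. The theorem supplies, for any POVMs $\{P^{x,x'}_{AA_0}\}$ and $\{Q^y_{B'B_0}\}$, a POVM $\{R^y_{BB_0}\}$ with matching joint statistics. I would specialize to product test POVMs $P^{x,x'}_{AA_0}:=P^x_A\otimes P^{x'}_{A_0}$, where $\{P^{x'}_{A_0}\}$ is the POVM furnished by Lemma~\ref{lemma:complete} so that the conditional states
\[
\omega^{x'}_{B_0}:=\Tr_{A_0}\bigl[(P^{x'}_{A_0}\otimes\openone_{B_0})\omega_{A_0B_0}\bigr]
\]
contain $(\dim\sH_{B_0})^2$ linearly independent elements, together with $Q^y_{B'B_0}:=Q^y_{B'}\otimes\openone_{B_0}$. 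Partial tracing over $A_0$ and $B_0$ then collapses the matching equation to
\[
\Tr\bigl[(P^x_A\otimes\mF^y(\omega^{x'}_{B_0}))\,\rho_{AB}\bigr] = \Tr\bigl[(P^x_A\otimes Q^y_{B'})\,\sigma_{AB'}\bigr]\cdot\Tr[\omega^{x'}_{B_0}],
\]
where $\mF^y(Z):=\Tr_{B_0}[R^y_{BB_0}(\openone_B\otimes Z)]$ defines a CP map $\bound(\sH_{B_0})\to\bound(\sH_B)$.

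The hard part will be converting this family of constraints into a single degrading channel $\mD:\sH_B\to\sH_{B'}$ with $(\id_A\otimes\mD)(\rho_{AB})=\sigma_{AB'}$. The key lever is $B_0$-completeness: since $\{\omega^{x'}_{B_0}\}$ spans $\bound(\sH_{B_0})$, the displayed identity extends by linearity to arbitrary $Z\in\bound(\sH_{B_0})$, and the isomorphism $\sH_{B_0}\cong\sH_{B'}$ then allows arbitrary operators on $B'$ to be effectively ``prepared'' on $B_0$ via (in general nonpositive) tests on $A_0$. Varying $\{Q^y_{B'}\}$ over POVMs on $B'$ and packaging the collected data as a Choi operator on $\sH_B\otimes\sH_{B'}$ yields the candidate $\mD$: complete positivity is inherited from the positivity of the $R^y_{BB_0}$, while trace preservation follows from the marginal condition $\Tr_B\rho_{AB}=\Tr_{B'}\sigma_{AB'}$, which enforces the correct normalization. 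Well-definedness of $\mD$, i.e.\ independence from the specific test POVMs chosen, is guaranteed because the bilinear form fixed by $\sigma_{AB'}$ pins down the extracted map uniquely, along the lines of Refs.~\cite{q-black,semi-quantum}.
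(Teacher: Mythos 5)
Your implications (1)~$\Rightarrow$~(2) and (2)~$\Rightarrow$~(3) are fine and coincide with the paper's argument. The problem is in (3)~$\Rightarrow$~(1), and it is fatal as written: you specialize the output test to the product form $Q^y_{B'B_0}:=Q^y_{B'}\otimes\openone_{B_0}$. With that choice the side system is completely decoupled on the $\sigma$ side: the right-hand side of your matching equation is $\Tr[(P^x_A\otimes Q^y_{B'})\,\sigma_{AB'}]\cdot\Tr[\omega^{x'}_{B_0}]$, which depends on $x'$ only through a scalar. So the auxiliary state $\omega_{A_0B_0}$ carries no information at all, and what you are left with is exactly condition (3)/(4) of Theorem~\ref{theo:comparison} for $\rho_{AB}$ versus $\sigma_{AB'}$ \emph{alone}. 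But that condition is strictly weaker than local degradability --- this is precisely why Theorem~\ref{theo:equiv} needs the tensor-product side resource in the first place (the ``quantum Blackwell'' statement fails without it; see Ref.~\cite{q-black}). Consequently your final step cannot work: the data $\mF^y(Z)$, whose pairing with $\rho_{AB}$ reproduces only scalar probabilities of the form $\Tr[(P^x_A\otimes Q^y_{B'})\sigma_{AB'}]\Tr[Z]$, determines at best a measurement-statistics simulation (a POVM $\{\mF^y(\openone_{B_0}/d)\}_y$ on $B$ mimicking $\{Q^y_{B'}\}$ on $\sigma$), not a CPTP map $\mD$ with $(\id_A\otimes\mD)(\rho_{AB})=\sigma_{AB'}$. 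There is also no consistency argument available for ``packaging'' across different choices of $\{Q^y_{B'}\}$, since the POVM $\{R^y_{BB_0}\}$ supplied by Theorem~\ref{theo:comparison} depends nonlinearly on the whole family $\{Q^y\}$.

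The missing idea is to make the output test \emph{entangle} $B'$ with $B_0$: the paper takes $\{Q^z_{B'B_0}\}$ to be a generalized Bell measurement $\{B^z_{B'B_0}\}$ on $\sH_{B'}\otimes\sH_{B_0}$. Completeness of the steered family $\{\omega^{x}_{B_0}\}$ (your Lemma~\ref{lemma:complete} step, which is correct) together with an informationally complete POVM on $A$ upgrades the matching of probabilities to an operator identity, and then the teleportation correction unitaries $U^z$ associated with the Bell measurement reassemble $\sigma_{AB'}$ exactly from the $z$-indexed fragments. This is what produces the explicit CPTP map
\begin{equation*}
\mD(X_B):=\sum_{z} U^z_{B_1} \Big\{\Tr_{BB_0}\big[(\openone_{B_1}\otimes R^z_{BB_0})\,(X_B\otimes|\Phi^+\>\<\Phi^+|_{B_1B_0})\big]\Big\} (U^z_{B_1})^\dag ,
\end{equation*}
whose complete positivity and trace preservation are manifest. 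Your outline gets the input-side ingredients right but omits the Bell-measurement/teleportation mechanism, which is the actual engine of the implication.
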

	
	\begin{proof}
		We begin by noticing that the implication (1)$\Rightarrow$(2) is a
		trivial consequence of the fact that, if $\sigma_{AB'}=(\id_A\otimes\mD_B)(\rho_{AB})$ for some channel $\mD_B:\bound(\sH_B)\to\bound(\sH_{B'})$, the action of any POVM
		$\{Q^x_{B'B_0}\}_x$ on $\sigma_{AB'}\otimes\omega_{A_0B_0}$ can be
		exactly simulated on $\rho_{AB}\otimes\omega_{A_0B_0}$ by using the
		POVM $\{(\mD_{B'}^*\otimes\id_{B_0})(Q^x_{B'B_0})\}_x$, where we denoted
		by $\mD_{B'}^*:\bound(\sH_{B'})\to\bound(\sH_B)$ the Heisenberg dual
		of $\mD_B$.
		
		Also the implication (2)$\Rightarrow$(3) is trivial.
		
		We are then left to prove that (3)$\Rightarrow$(1). In order to do
		so, we consider two auxiliary Hilbert spaces $\sH_{A_0}$ and
		$\sH_{B_0}$, such that $\sH_{B_0}\cong \sH_{B'}$, and
		a $B_0$-complete state $\omega_{A_0B_0}$ (see
		Def.~\ref{def:complete}). We then consider, in particular, the
		following measurement on the composite state
		$\sigma_{AB'}\otimes\omega_{A_0B_0}$:
		\begin{equation*}
		\Tr[(\Upsilon^y_A\otimes \Xi^x_{A_0}\otimes B^z_{B'B_0})\ (\sigma_{AB'}\otimes\omega_{A_0B_0})],
		\end{equation*}
		where
		\begin{itemize}
			\item $\{\Upsilon^y_A\}_y$ is an informationally complete POVM on
			$\sH_A$ (i.e. any operator in $\bound(\sH_A)$ can be written as a linear combination of its elements);
			\item $\{\Xi^x_{A_0}\}_x$ is the POVM on $\sH_{A_0}$, whose existence
			is guaranteed by Lemma~\ref{lemma:complete}, inducing a complete set
			of linearly independent reduced (subnormalised) states
			$\omega_{B_0}^x=\Tr_{A_0}[(\Xi^x_{A_0}\otimes\openone_{B_0})\
			\omega_{A_0B_0}]$ on $\sH_{B_0}$;
			\item $\{B^z_{B'B_0}\}_z$ is a generalised Bell measurement on
			$\sH_{B'}\otimes\sH_{B_0}\cong\sH_{B'}^{\otimes 2}$ (i.e. a complete set of $(\dim\sH_{B'})^2$ orthogonal maximally entangled states).
		\end{itemize}
		
		First of all, we know that, by Theorem~\ref{theo:comparison}
		condition~3, there exists a POVM $\{R^z_{BB_0}\}_z$ such that
		\begin{equation*}
		\Tr[(\Upsilon^y_A\otimes \Xi^x_{A_0}\otimes R^z_{BB_0})\ (\rho_{AB}\otimes\omega_{A_0B_0})]=  \Tr[(\Upsilon^y_A\otimes \Xi^x_{A_0}\otimes B^z_{B'B_0})\ (\sigma_{AB'}\otimes\omega_{A_0B_0})],
		\end{equation*}
		for every triple $(x,y,z)$. Then, by first performing the
		trace over $\sH_{A_0}$, we obtain the following identity:
		\begin{equation}\label{eq:passaggio}
		\Tr[(\Upsilon^y_A\otimes R^z_{BB_0})\ (\rho_{AB}\otimes\omega^x_{B_0})]=  \Tr[(\Upsilon^y_A\otimes B^z_{B'B_0})\ (\sigma_{AB'}\otimes\omega^x_{B_0})],
		\end{equation}
		where, as we noticed above,
		$\Span\{\omega^x_{B_0}\}=\bound(\sH_{B_0})$.
		
		We now introduce another Hilbert space $\sH_{B_1}\cong\sH_{B_0}\cong\sH_{B'}$ and fix orthonormal bases $\{|\alpha^i\>\}_i$ and
		$\{|\beta^j\>\}_j$ for $\sH_{B_1}$ and $\sH_{B_0}$, respectively. Further, let the standard maximally entangled state in
		$\sH_{B_1}\otimes\sH_{B_0}$ be given by
		\begin{equation*}
		|\Phi^+_{B_1B_0}\>:=d^{-1/2}\sum_i|\alpha^i_{B_1}\>\otimes|\beta^i_{B_0}\>,
		\end{equation*}
		where $d:=\dim\sH_{B_1}=\dim\sH_{B_0}=\dim\sH_{B'}$. Let us, moreover,
		define the operators
		\begin{equation*}
		\begin{split}
		\Omega^x_{B_1}&=d^2\Tr_{B_0}[(\openone_{B_1}\otimes\omega^x_{B_0})\
		|\Phi^+_{B_1B_0}\>\<\Phi^+_{B_1B_0}|]\\
		&=d\ \left(\omega^x_{B_1}\right)^T,
		\end{split}
		\end{equation*}
		where the transposition is made with respect to the basis chosen in
		the definition of $|\Phi^+_{B_1B_0}\>$. Clearly,
		$\Span\{\omega^x_{B_0}\}=\bound(\sH_{B_0})$ implies that
		$\Span\{\Omega^x_{B_1}\}=\bound(\sH_{B_1})$, since neither the
		transposition nor the multiplication by a non-zero scalar affect the
		property of being linearly independent. It is moreover easy to verify (even by direct inspection) that
		\[
		\Tr_{B_1}[(\Omega^x_{B_1}\otimes\openone_{B_0})\ |\Phi^+_{B_1B_0}\>\<\Phi^+_{B_1B_0}|]=\omega^x_{B_0}
		\]
		for all $x$.
		
		Going back to Eq.~(\ref{eq:passaggio}), we can therefore rewrite it as:
		\begin{equation*}
		\Tr[(\Upsilon^y_A\otimes\Omega^x_{B_1}\otimes R^z_{BB_0})\
		(\rho_{AB}\otimes|\Phi^+\>\<\Phi^+|_{B_1B_0})]=  \Tr[(\Upsilon^y_A\otimes\Omega^x_{B_1}\otimes
		B^z_{B'B_0})\ (\sigma_{AB'}\otimes|\Phi^+\>\<\Phi^+|_{B_1B_0})].
		\end{equation*}
		Since both $\{\Upsilon^y_A\}_y$ and $\{\Omega^x_{B_1}\}_x$ are a
		complete set of linearly independent operators~\cite{footnote3}, the above equality is,
		in fact, an operator identity:
		\begin{equation}\label{eq:tr-parz}
		\Tr_{BB_0}[(\openone_{AB_1}\otimes R^z_{BB_0})\
		(\rho_{AB}\otimes|\Phi^+\>\<\Phi^+|_{B_1B_0})]=  \Tr_{B'B_0}[(\openone_{AB_1}\otimes B^z_{B'B_0})\  (\sigma_{AB'}\otimes|\Phi^+\>\<\Phi^+|_{B_1B_0})],
		\end{equation}
		for all $z$.
		
		We recall now that the POVM $\{B^z_{B'B_0}\}_z$, appearing on the
		right-hand side of the above equation, has been chosen to constitute a
		generalised Bell measurement on
		$\sH_{B'}\otimes\sH_{B_0}\cong\sH_{B'}^{\otimes 2}$. Therefore, the
		protocol of quantum teleportation provides unitary operators
		$U^z:\sH_{B_1}\to\sH_{B'}$ such that
		\begin{equation*}
		\sum_z(\openone_A\otimes
		U^z_{B_1}) \Big\{\Tr_{B'B_0}\left[(\openone_A\otimes
		\openone_{B_1}\otimes B^z_{B'B_0})\  (\sigma_{AB'}\otimes|\Phi^+\>\<\Phi^+|_{B_1B_0})\right]\Big\} (\openone_A\otimes U^z_{B_1})^\dag=\sigma_{AB'}.
		\end{equation*}
		Then, by defining a CPTP map
		$\mD:\bound(\sH_B)\to\bound(\sH_{B'})$:
		\begin{equation*}
		\mD(X_B):=\sum_{z} U^z_{B_1} \Big\{\Tr_{BB_0}\left[(\openone_{B_1}\otimes R^z_{BB_0})\
		(X_B\otimes|\Phi^+\>\<\Phi^+|_{B_1B_0})\right]\Big\} (U^z_{B_1})^\dag,
		\end{equation*}
		for all $X_B\in\bound(\sH_B)$, we arrive at
		\begin{equation*}
		(\id_A\otimes\mD)(\rho_{AB})=\sigma_{AB'},
		\end{equation*}
		i.e. $\rho_{AB}\suff\sigma_{AB'}$.
	\end{proof}
	
	\section{...and back to channels}\label{sec:back-to-channels}
	
	The starting observation is that, due to the invertibility of the Choi isomorphism (Theorem~\ref{theo:choi}), a channel $\mN$ can be degraded to another channel $\mM$ (i.e. there exists a third channel $\mD$ such that $\mM=\mD\circ\mN$) if and only if the bipartite state $\rho^{\mN}$ can be locally degraded to $\rho^{\mM}$, in the sense of Definition~\ref{def:suff}. However, before being able to translate Theorem~\ref{theo:equiv} into its analogue for channels, we first have to understand what sort of channels induce complete (in the sense of Definition~\ref{def:complete}) Choi states. The answer is given by the following definition:

\begin{definition}[Complete channels]\label{def:complete-ch}
	A channel $(\sH_A,\sH_B,\mN)$ is said to be
	\emph{complete} whenever its range contains $(\dim\sH_B)^2$
	linearly independent elements.
\end{definition}

Other than the trivial example of the identity channel, another, more interesting class of channels that are complete is given by $d$-dimensional symmetric channels of Definition~\ref{def:symmetric}.

\begin{lemma}\label{lemma:complete-channels}
	A channel is complete if and only if its associated Choi state is complete, in the sense of Def.~\ref{def:complete}. In particular, all $d$-dimensional symmetric channels are complete together with their associated Choi states.
\end{lemma}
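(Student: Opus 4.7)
My plan is to prove the two claims separately, using the explicit form of the Choi isomorphism for the first part, and the structure of the symmetric subspace for the second.

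For the equivalence of completeness of $\mN$ and $B$-completeness of $\rho^\mN_{AB}$, I would work directly with the inversion formula in Theorem~\ref{theo:choi}. Applied to any $P_A\ge 0$, it gives
\begin{equation*}
\Tr_A[(P_A\otimes\openone_B)\ \rho^\mN_{AB}]=\frac{1}{d}\mN(P_A^T),
\end{equation*}
so the local state space $\sS_B(\rho^\mN_{AB})$ coincides (up to trace normalisation, which does not affect linear independence) with the image $\{\mN(Q)\mid Q\ge 0\}$ of the positive cone under $\mN$. Since every operator $X\in\bound(\sH_A)$ is an $\mathbb R$-linear combination of (at most four) positive operators, the linear span of $\{\mN(Q)\mid Q\ge 0\}$ equals the range of $\mN$. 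Thus the range of $\mN$ contains $(\dim\sH_B)^2$ linearly independent elements if and only if $\sS_B(\rho^\mN_{AB})$ does, which by Definition~\ref{def:complete} is exactly the statement that $\rho^\mN_{AB}$ is $B$-complete.

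For the second claim, I would exploit the dimensional match built into Definition~\ref{def:symmetric}: the isometry $V:\sH_A\to(\sH_B\otimes\sH_E)_{\operatorname{sym}}$ maps a space of dimension $d(d+1)/2$ into one of the same dimension, hence $V$ is a unitary isomorphism onto the symmetric subspace. Now for every unit vector $|\psi\>\in\sH_B$, the product vector $|\psi\>\otimes|\psi\>$ lies in $(\sH_B\otimes\sH_E)_{\operatorname{sym}}$, so $|\chi_\psi\>:=V^\dag(|\psi\>\otimes|\psi\>)\in\sH_A$ satisfies
\begin{equation*}
\mS(|\chi_\psi\>\<\chi_\psi|)=\Tr_E\bigl[|\psi\>\<\psi|\otimes|\psi\>\<\psi|\bigr]=|\psi\>\<\psi|.
\end{equation*}
Hence the range of $\mS$ contains every rank-one projector on $\sH_B$, and since such projectors span the whole of $\bound(\sH_B)$, the range of $\mS$ contains $d^2=(\dim\sH_B)^2$ linearly independent elements. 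By Definition~\ref{def:complete-ch} $\mS$ is complete, and by the first part so is its Choi state.

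The only step that requires any real care is the first one, where one must check that the local state space is essentially the $\mN$-image of the positive cone; once that identification is made, everything reduces to the elementary fact that positive operators span $\bound(\sH_A)$. The second half is then almost immediate, because the symmetric subspace always contains the diagonal product vectors $|\psi\>\otimes|\psi\>$, which already suffice to produce every pure state at the output.
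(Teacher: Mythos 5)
Your proof is correct and follows essentially the same route as the paper's: the identity $\Tr_A[(P_A\otimes\openone_B)\,\rho^\mN_{AB}]=\tfrac{1}{d}\mN(P_A^T)$ is exactly the steering relation the paper invokes, and your treatment of symmetric channels via the product vectors $|\psi\>\otimes|\psi\>$ in the symmetric subspace is the paper's argument as well. If anything, your version is slightly tidier: you obtain both directions of the equivalence at once from the single identification of $\sS_B(\rho^\mN_{AB})$ with the channel's range, and you make explicit (via the dimension count $d(d+1)/2$) why $V^\dag(|\psi\>\otimes|\psi\>)$ is a legitimate input state, a point the paper leaves implicit.
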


\begin{proof}
 Let $(\sH_A,\sH_B,\mN)$ be a complete channel. By definition there exist $(\dim\sH_B)^2$ input states $\rho^i_A$ such that the set $\{\mN(\rho^i_A):1\le i\le(\dim\sH_B)^2\}$ spans the whole $\bound(\sH_B)$.
 
 We now recall the fact, sometimes referred to as \textit{steering}~\cite{schrodinger}, that, for any state $\rho_A$, there exists an operator $P_{\tilde A}>0$ such that $\rho_A=\Tr_{\tilde A}[(P_{\tilde A}\otimes\openone_A)\ |\Phi^+_{\tilde AA}\>\<\Phi^+_{\tilde AA}|]$, where $\sH_{\tilde A}\cong\sH_A$ and $|\Phi^+_{\tilde AA}\>$ is a maximally entangled state in $\sH_{\tilde A}\otimes\sH_A$. Therefore, for any given channel $(\sH_A,\sH_B,\mN)$, its Choi state $\rho^\mN_{AB}$ is constructed so that, for any input state $\rho_A$, there exists an operator $P_A>0$ such that $\mN(\rho_A)=\Tr_{A}[(P_{A}\otimes\openone_B)\ \rho^\mN_{AB}]$. In turn, this implies that, whenever the channel $(\sH_A,\sH_B,\mN)$ is complete, there exists a set of operators $\{P^i_A>0:1\le i\le(\dim\sH_B)^2\}$ such that the set $\{\Tr_A[(P^i_A\otimes\openone_B)\ \rho^\mN_{AB}]:1\le i\le (\dim\sH_B)^2\}$ spans the whole $\bound(\sH_B)$, i.e. the bipartite state $\rho^\mN_{AB}$ is complete, according to Definition~\ref{def:complete}.

Conversely, suppose that the Choi state $\rho^\mN_{AB}$, associated with a channel $(\sH_A,\sH_B,\mN)$, is ($B$-)complete, in the sense of Definition~\ref{def:complete}. Then, by definition, there exist $(\dim\sH_B)^2$ operators $P_A^i>0$ such that the states defined as $\rho^i_B:=\Tr_A[(P^i_A\otimes\openone_B)\ \rho^\mN_{AB}]$ are all linearly independent in $\bound(\sH_B)$. On the other hand,  $\rho^i_B=\Tr_A[(P^i_A\otimes\openone_B)\ \rho^\mN_{AB}]=\Tr_{\tilde{A}}[(P^i_{\tilde{A}}\otimes\openone_B)\ (\id_{\tilde{A}}\otimes\mN_{A})(|\Phi^+_{\tilde{A}A}\>\<\Phi^+_{A\tilde{A}}|)]=\mN_A(\rho^i_A)$, where $\rho^i_A:=\Tr_{\tilde{A}}[(P^i_{\tilde{A}}\otimes\openone_B)\ |\Phi^+_{\tilde{A}A}\>\<\Phi^+_{\tilde{A}A}|]$, i.e., all $\rho^i_B$ belong to the range of the channel $(\sH_A,\sH_B,\mN)$, meaning that $\mN$ is complete in the sense of Definition~\ref{def:complete-ch}.
 
 Let us now turn to the special case of $d$-dimensional symmetric channels, as introduced in Definition~\ref{def:symmetric}. We just show that the channels are complete; the completeness of the corresponding Choi states then comes automatically. Consider therefore any $d$-dimensional symmetric channel $\mS$, defined by two $d$-dimensional Hilbert spaces $\sH_B$ and $\sH_E\cong\sH_B$, a $\frac{d(d+1)}{2}$-dimensional Hilbert space $\sH_A$, and an isometry $V:\sH_A\to(\sH_B\otimes\sH_E)_{\operatorname{sym}}$. Choose now $(\dim\sH_B)^2$ vectors $\{|\phi^i_B\>\}$ in $\sH_B$ such that the corresponding rank-one states $|\phi^i_B\>\<\phi^i_B|$ are all linearly independent in $\bound(\sH_B)$. Since $|\phi^i_B\>\otimes|\phi^i_E\>\in(\sH_B\otimes\sH_E)_{\operatorname{sym}}$ for all $i$, all the $(\dim\sH_B)^2$ pure states $\Tr_E[|\phi^i_B\>\<\phi^i_B|\otimes|\phi^i_E\>\<\phi^i_E|]=|\phi^i_B\>\<\phi^i_B|$ are possible outputs of $\mS$, i.e., $\mS$ is complete. Therefore, its associated Choi state $\omega_{AB}^{\mS}:=(\id\otimes\mS)(|\Phi^+\>\<\Phi^+|)=\Tr_E[(\openone\otimes V)(|\Phi^+\>\<\Phi^+|)(\openone\otimes V^\dag)]$ is a complete state.
\end{proof}

We are now ready to prove Theorem~\ref{theo:main}. In fact, we will do this indirectly, by proving that Theorem~\ref{theo:main} is nothing but Theorem~\ref{theo:equiv} formulated for a channel, rather than for a bipartite quantum state.

\begin{proof}[Proof of Theorem~\ref{theo:main}]
	Since implications (1) $\Rightarrow$ (2), and (2) $\Rightarrow$ (3) are trivial, we will focus only on the implication (3) $\Rightarrow$ (1).
	
	In order to prove the implication (3) $\Rightarrow$ (1), first of all we notice that, given two channels $(\sH_A,\sH_B,\mN_{\alpha})$ and $(\sH_A,\sH_{B'},\mN_{\beta})$, the Choi isomorphism (Theorem~\ref{theo:choi}) provides two bipartite states $\rho^\alpha_{AB}:=(\id\otimes\mN^{\alpha})(|\Phi^+\>\<\Phi^+|)$ and $\sigma^\beta_{AB'}:=(\id\otimes\mN^{\beta})(|\Phi^+\>\<\Phi^+|)$ such that $\Tr_B\rho^\alpha_{AB}=\Tr_{B'}\sigma^\beta_{AB'}=d_A^{-1}\openone_A$. We can therefore apply Theorem~\ref{theo:equiv} to $\rho^\alpha_{AB}$ and $\sigma^\beta_{AB'}$.
	
	Since point (3) of Theorem~\ref{theo:equiv} requires the comparison to be performed with some additional complete bipartite state, we can take  the state $\omega_{A_0B_0}$ appearing in point (3) of Theorem~\ref{theo:equiv} to be, in fact, the Choi state corresponding to a $d_{B_0}$-symmetric channel, which we know it is complete as a consequence of Lemma~\ref{lemma:complete-channels}.
	
	We then notice that, playing a `static' guessing game, as defined in Eq.~(\ref{eq:static-guessing}), with some Choi state is statistically equivalent to playing a `dynamic' guessing game with the corresponding channel, as described in Definition~\ref{def:dynamic-guessing}. The relation between the two approaches is again given by steering. As already noticed in the proof of Lemma~\ref{lemma:complete-channels}, for any given channel $(\sH_A,\sH_B,\mN)$, its Choi state $\rho^\mN_{AB}$ is constructed so that, for any ensemble $(\sH_A,\set{X},\{p_x,\rho^x_A\})$ there exists a POVM $\{P^x_A\}$ such that $p_x\mN(\rho^x_A)=\Tr_{A}[(P^x_{A}\otimes\openone_B)\ \rho^\mN_{AB}]$ for all $x$.
	
	It is therefore clear that point (3) in Theorem~\ref{theo:equiv} is completely equivalent (in fact, just a reformulation) of point (3) in Theorem~\ref{theo:main}. Since point (3) in Theorem~\ref{theo:equiv} is also equivalent to point (1)  in Theorem~\ref{theo:equiv}, we are left to show that point (1)  in Theorem~\ref{theo:equiv} is just a reformulation of point (1) in Theorem~\ref{theo:main}. The logical steps are summarized as follows:
	\begin{equation*}
	\textrm{Thm.~\ref{theo:main}, point (3)}\quad\Leftrightarrow\quad\textrm{Thm.~\ref{theo:equiv}, point (3)}\quad\Leftrightarrow\quad\textrm{Thm.~\ref{theo:equiv}, point (1)}\quad\Leftrightarrow\quad\textrm{Thm.~\ref{theo:main}, point (1)},
	\end{equation*}
	where the first equivalence has been proved above, the second equivalence is in the statement of Theorem~\ref{theo:equiv}, and only the last equivalence is left to be proved. But this is a simple consequence of the fact that Choi's correspondence is one-to-one, therefore two channels $\mN_\alpha$ and $\mN_\beta$ are such that there exists a third channel $\mD$ with $\mN_\beta=\mD\circ\mN_\alpha$, if and only if $\rho^{\mN_\alpha}_{AB}\succ\rho^{\mN_\beta}_{AB'}$.
	 \end{proof}

\section{Further implications of Theorem~\ref{theo:main}}\label{sec:further}

One can extend the results of Theorem~\ref{theo:main} to convex combinations of channels. It was shown in~\cite{SmithSmolin08} that degradable channels and degradable extensions have especially nice properties which prove to be useful when evaluating their quantum and private capacities. These properties are also reflected in the game-theoretic framework. In particular, the following two corollaries show how to compare and combine convex combinations of degradable channels and their extensions in this framework.

\begin{corollary}
	Consider a channel $(\sH_A,\sH_B,\mN_\beta)$, and a sequence of channels $(\sH_A,\sH_B,\mN_i)$, $i=1,\dots,n$, such that $\mN_\beta = \sum_i p_i \mN_i$. Assume that each $\mN_i$ is degradable, with the corresponding degrading map is given by $(\sH_B,\sH_B^{'},\mD_i)$. Define the flagged version of the convex combination of $\mN_i$ as ${\cal T} = \sum_i p_i \mN_i\otimes |i\rangle\langle i|$. Then, $\cal T$ is strongly more informative than $\mN_\beta$, i.e.
${\cal T}\supseteqs\mN_\beta$.

\end{corollary}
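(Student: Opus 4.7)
The plan is a direct application of Theorem~\ref{theo:main}: by the equivalence $(1)\Leftrightarrow(2)$ in that theorem, to prove ${\cal T}\supseteqs\mN_\beta$ it suffices to exhibit a quantum channel $\mathcal{E}$ such that $\mN_\beta=\mathcal{E}\circ{\cal T}$. Everything else is bookkeeping.

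First I fix notation. Let $\sH_F$ be the flag Hilbert space with orthonormal basis $\{|i\>\}_{i=1}^n$, so that ${\cal T}$ is a channel from $\bound(\sH_A)$ to $\bound(\sH_B\otimes\sH_F)$ acting as ${\cal T}(\rho)=\sum_i p_i\,\mN_i(\rho)\otimes|i\>\<i|$, while $\mN_\beta$ maps $\bound(\sH_A)$ to $\bound(\sH_B)$. The asymmetry of output spaces is the only thing one must be careful about.

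Next I take the candidate $\mathcal{E}$ to be the partial trace over the flag register, $\mathcal{E}(X):=\Tr_F[X]$, which is manifestly CPTP from $\bound(\sH_B\otimes\sH_F)$ to $\bound(\sH_B)$. A direct computation yields
\begin{equation*}
\mathcal{E}({\cal T}(\rho))=\Tr_F\Bigl[\sum_i p_i\,\mN_i(\rho)\otimes|i\>\<i|\Bigr]=\sum_i p_i\,\mN_i(\rho)=\mN_\beta(\rho),
\end{equation*}
for every $\rho\in\sS(\sH_A)$, hence $\mN_\beta=\mathcal{E}\circ{\cal T}$. Invoking Theorem~\ref{theo:main} then delivers ${\cal T}\supseteqs\mN_\beta$.

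I remark that the degradability hypothesis on the constituent channels $\mN_i$ (together with the maps $\mD_i$) is not actually invoked in the above argument; it is playing a different role, namely providing the additional and strictly stronger structural fact, presumably exploited elsewhere in this section, that ${\cal T}$ is itself a degradable channel, with a conditional degrading map of the flagged form $\sum_i \mD_i(\cdot)\otimes|i\>\<i|$ that acts on each flagged branch by the corresponding $\mD_i$. In short, the only obstacle worth flagging is purely notational — keeping track of the correct flagged output space on which $\mathcal{E}$ is defined — and the corollary as stated is an immediate consequence of Theorem~\ref{theo:main} applied to the trivial extension obtained by discarding the flag.
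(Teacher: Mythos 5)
Your proof is correct and follows essentially the same route as the paper: establish that ${\cal T}$ is an extension of $\mN_\beta$ and then invoke the equivalence (1)$\Leftrightarrow$(2) of Theorem~\ref{theo:main}. The only difference is that the paper obtains the extension property by citing Smith and Smolin (who prove the stronger statement that ${\cal T}$ is a \emph{degradable} extension), whereas you verify it directly with the one-line computation $\Tr_F\circ{\cal T}=\sum_ip_i\mN_i=\mN_\beta$; your observation that the degradability of the $\mN_i$ is never used for the stated conclusion is accurate, since the paper's own application of Theorem~\ref{theo:main} likewise relies only on the extension property and not on degradability.
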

\begin{proof}
It was proven in~\cite{SmithSmolin08} that $\cal T$ is a degradable extension of $\mN_\beta$. Then the corollary follows after applying Theorem~\ref{theo:main}.
\end{proof}

\begin{corollary}
Consider two channels $(\sH_A,\sH_B,\mN_i)$, $i=1,2$, for each of which there exist $(\sH_A,\sH_B^{'},{\cal T}_i)$ and $(\sH_B^{'},\sH_B,\mD_i)$ such that $\mN_i = \mD_i\circ {\cal T}_i$ for $i=1,2$.
Then, for ${\cal T} = p {\cal T}_1\otimes |1\rangle\langle 1| + (1-p){\cal T}_2\otimes |2\rangle\langle 2|$ and $\mN = p\mN_1 + (1-p)\mN_2$ we have that $\cal T$ is strongly more informative than $\cal N$: ${\cal T}\supseteqs\mN$.

\end{corollary}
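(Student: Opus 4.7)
The plan is to invoke the equivalence (1) $\Leftrightarrow$ (2) of Theorem~\ref{theo:main}: to conclude ${\cal T}\supseteqs\mN$ it suffices to exhibit a single CPTP map $\mD$ such that $\mN = \mD\circ{\cal T}$. The entire task therefore reduces to constructing such a global degrading map; no separate argument over side channels is needed.

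The construction is dictated by the flagged form of $\cal T$. On any input $\rho$ one has
\begin{equation*}
{\cal T}(\rho) = p\,{\cal T}_1(\rho)\otimes|1\rangle\langle 1| + (1-p)\,{\cal T}_2(\rho)\otimes|2\rangle\langle 2|,
\end{equation*}
so the $\mathbb{C}^2$ register is a classical flag that perfectly labels which branch was activated. The natural candidate is ``read the flag and apply the corresponding $\mD_i$''; explicitly I would define
\begin{equation*}
\mD(X) := \mD_1\bigl((\openone_{B'}\otimes\langle 1|)\,X\,(\openone_{B'}\otimes |1\rangle)\bigr) + \mD_2\bigl((\openone_{B'}\otimes\langle 2|)\,X\,(\openone_{B'}\otimes |2\rangle)\bigr)
\end{equation*}
for $X\in\bound(\sH_{B'}\otimes\mathbb{C}^2)$. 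Each summand is a composition of CP maps; since the two flag projectors sum to $\openone$ and each $\mD_i$ is itself trace-preserving, the total $\mD$ is CPTP.

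Verifying $\mD\circ{\cal T}=\mN$ is then a one-line calculation: orthogonality of the flag vectors collapses $\mD({\cal T}(\rho))$ to $p\,\mD_1({\cal T}_1(\rho))+(1-p)\,\mD_2({\cal T}_2(\rho))$, which equals $p\,\mN_1(\rho)+(1-p)\,\mN_2(\rho)=\mN(\rho)$ by the hypothesis $\mN_i=\mD_i\circ{\cal T}_i$. Theorem~\ref{theo:main} then yields ${\cal T}\supseteqs\mN$. I do not foresee any genuine obstacle; the only conceptual point is that the flag supplies exactly the classical side information needed to fold the two branch-wise degradings $\mD_1,\mD_2$ into one global degrading of $\cal T$, something that would generally fail for the un-flagged convex combination $\sum_i p_i\,{\cal T}_i$ where the branch labels have been averaged out.
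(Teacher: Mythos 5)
Your proof is correct and follows the same route as the paper: both reduce the claim to exhibiting a single degrading map $\mD$ with $\mN=\mD\circ\mathcal{T}$ and then invoke the implication (1)$\Rightarrow$(2) of Theorem~\ref{theo:main}. The only difference is that the paper cites~\cite{SmithSmolin08} for the fact that $\mathcal{T}$ is an extension of $\mN$, whereas you explicitly construct the ``measure the flag, then apply $\mD_i$'' map, which is exactly the construction behind that citation.
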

\begin{proof}
It is sufficient to observe that $\cal T$ is a degradable extension of $\mN$~\cite{SmithSmolin08}.  Then the corollary follows after applying Theorem~\ref{theo:main}.
\end{proof}

\section{Conclusions}\label{sec:conclude}
We introduced a game-theoretic framework~\ref{game} which allowed us to 
derive a necessary and sufficient condition for a channel to be antidegradable.
We showed that for any channel which is not antidegradable, there exists 
an encoding strategy for which such a channel provides a strict advantage 
for the two players over the adversary in the guessing game that we 
defined. The key ingredients in the proof of this result are the tools of statistical comparison of bipartite 
quantum states, and the Choi isomorphism.

The exact relationship between our game-theoretic framework and the standard information-theoretic framework remains to be explored. 
It would be interesting to
see whether any inference about the quantum or private capacity of a 
quantum channel could be made from results obtained in our game-theoretic framework.

Another direction worth pursuing is one which involves devising game-theoretic characterizations of other classes of quantum channels, since this might lead to a better 
understanding of the structure of zero-capacity channels. 
It would also be interesting to explore the connections between our game-theoretic approach and other incapacity tests~\cite{smith_detecting_2012} for quantum channels. 

\section*{Acknowledgements}
The authors are grateful to Michele Dall'Arno for suggesting an improvement 
to their previous proof of Lemma~\ref{lemma:complete}.
S.S. acknowledges the support of Sidney Sussex College.
%\bibliography{test}
%
%\bibliographystyle{siam}

\appendix

\end{document}